\newtheorem{Theorem}{Theorem}[section]
\newtheorem{thm}[Theorem]{Theorem}
\newtheorem{dfn}[Theorem]{Definition}
\newtheorem{lem}[Theorem]{Lemma}
\newtheorem{eg}[Theorem]{Example}
\newenvironment{proof}[1][Proof]{\noindent\textbf{#1.} }{\ \rule{0.5em}{0.5em}}
\newcommand{\bpartial}{\mathop{\partial\kern -4pt\raisebox{.8pt}{$|$}}}
\newcommand{\bra}{\mathopen{[\kern-1.6pt[}}
\newcommand{\ket}{\mathclose{]\kern-1.5pt]}}
\newcommand{\bbra}{\mathopen{[\kern-2.2pt[\kern-2.3pt[}}
\newcommand{\bket}{\mathclose{]\kern-2.1pt]\kern-2.3pt]}}
\newcommand*{\rom}[1]{\expandafter\@slowromancap\romannumeral #1@}
\begin{document}

\title {\large{\bf
 Integrable bi-Hamiltonian systems by
 Jacobi structure
on  real three-dimensional
Lie groups
 }}
\vspace{3mm}
\author { \small{ \bf H. Amirzadeh-Fard$^{1, 3}$ }\hspace{-2mm}{ \footnote{ e-mail: h.amirzadehfard@azaruniv.ac.ir}},
 \small{ \bf  Gh. Haghighatdoost$^1$}\hspace{-1mm}{ \footnote{ e-mail: gorbanali@azaruniv.ac.ir}},
 \small{ \bf A.
Rezaei-Aghdam$^2$ }\hspace{-1mm}{\footnote{ e-mail: rezaei-a@azaruniv.ac.ir }}\\
{\small{$^{1}$\em
 Department
of Mathematics, Azarbaijan Shahid Madani University, 53714-161, Tabriz, Iran}}\\
{\small{$^{2}$\em Department of Physics, Azarbaijan Shahid Madani
University, 53714-161, Tabriz, Iran}}\\
{\small{$^{3}$\em
Department of Education,  Beiranvand  Region,
 99904038, Lorestan, Iran}}\\ }
 \maketitle
\begin{abstract}
By Poissonization of  Jacobi structures on  real
three-dimensional Lie groups $\mathbf{G}  $ and using the realizations of
their Lie algebras, we obtain integrable bi-Hamiltonian systems  on
$\mathbf{G}   \otimes \mathbb{R}.$
\end{abstract}

\smallskip

{\bf keywords:}{  Poissonization, Jacobi manifold, Completely integrable
Hamiltonian system.}

\section {\large {\bf Introduction}}
Let $(M^{2n},\omega)$ be a  symplectic manifold, and  $\dot{x}=X_H(x)$ be a
 Hamiltonian
 system on it,
where $H$ is a smooth function on $M^{2n}$ which is called the
Hamiltonian and   vector field $X_H(x)=\omega^{-1}(dH(x))$
is the  corresponding Hamiltonian  vector field.
A completely Liouville-integrable
Hamiltonian system
is a Hamiltonian system
 with $n$  functionally independent first integrals  in involution.
  In other words, two smooth functions f and g on
 $(M^{2n},\omega)$
 are in involution if their Poisson bracket equals
zero \textcolor{red}{ [\ref{Bolsinov}]}.
 A symmetry  of  a Hamiltonian system $(M^{2n}, \omega, H)$
is a transformation $S:M^{2n}\longrightarrow M^{2n}$
such that
$S^*\omega=\omega$
and
  $S^*H=H $
where $S^*$
is the pullback of the symplectic form $\omega$ or $H$ by $S.$
The set of all symmetries forms
a group, which is called the symmetry group and can be a Lie group  \textcolor{red}{[\ref{symm}]}.
 One can construct a dynamical system
 for which
 the Lie group
 plays the role of the symmetry Lie group
 and
 the symplectic manifold plays  the role of the phase space
 \textcolor{red}{[\ref{Abedi-Fardad}]}.

 Jacobi manifolds which are a generalization of  Poisson   manifolds
 have various applications in  physics and  classical mechanics.
 Poisson manifolds  play an important role as
 phase spaces of classical mechanics.
  In \textcolor{red}{[\ref{hassan2}]}, we  have  classified all Jacobi structures on real three-dimensional Lie groups.
In this work, using  the Poissonization  of the Jacobi structure $( \Lambda, E)$ on $M$
\textcolor{red}{[\ref{ref4}]},
we convert  the Jacobi structure $( \Lambda, E)$ on $M$
  into
 the Poisson structure
on $M\times \mathbb{R}$
and we consider those
  Poisson structures that are non-degenerate, and so define  symplectic  structures.
Then,
applying Darboux's theorem \textcolor{red}{ [\ref{Bolsinov}]} and
using realizations \textcolor{red}{ [\ref{realiz}]} of
 real  three-dimensional Lie algebras
 $
 \mathfrak{g}$
(related
  Lie group $\mathbf{G}  $),
  we construct  integrable  Hamiltonian systems
   for which the
 Lie group $\mathbf{G}  $
 plays the role of  the symmetry Lie group
 and
 the symplectic manifold $\mathbf{G}  \otimes \mathbb{R}$ plays the role of the phase space.

 In mathematical physics and mechanics,   many integrable dynamical systems   admit the
 bi-Hamiltonian  structure,
that is Hamiltonian with respect to two compatible Poisson  structures $P_1$ and $P_2$
\textcolor{red}{ [\ref{Magri}]}.
Here, we  will calculate
  all equivalence-classes of
Jacobi structures on real three-dimensional Lie groups
$\mathbf{G}$
 for which after  Poissonization we have  non-degenerate Poisson brackets on
  $M=\mathbf{G} \otimes \mathbb{R}$
  and then we will
   study
 the existence of a bi-Hamiltonian structure for a completely
integrable Hamiltonian system.

The outline of the paper is as follows: In Sec. 2,
 we  briefly recall
 the Jacobi structures on real low-dimensional Lie
groups
 and also
 the construction of
 the Liouville-integrable
Hamiltonian system.
In Sec. 3,
 we find integrable Hamiltonian systems such
 that their phase spaces
 are obtained by
 using
  Poissonization of
  the Jacobi structures on some real three-dimensional Lie groups.
 In Sec. 4,
   we study
 the existence of the bi-Hamiltonian structure for a completely
integrable Hamiltonian system obtained    in Sec. 3.
\newpage
\section{A  review of the necessary constructions
}
For self-containing  of the paper,
 we review the essential results about Jacobi structures on real low-dimensional Lie
groups
\textcolor{red}{[\ref{hassan2}]}
and Integrable Hamiltonian systems \textcolor{red}{[\ref{Bolsinov}, \ref{Abedi-Fardad}]}.

\subsection{ Jacobi structures on real low-dimensional Lie
groups}
The study of  the Jacobi manifolds was introduced   by Lichnerowicz and Kirillov \textcolor{red}{[\ref{ref4}, \ref{ref3}]}.\\
A Jacobi manifold  $(M, \mathbf{\Lambda },\mathbf{ E})$  is a
manifold $M$
admitting a
 bivector field
 $\mathbf{\Lambda }$
 and
 a Reeb vector
field
$\mathbf{E}$
 such that
$
[\mathbf{\Lambda }, \mathbf{\Lambda }]=2E\wedge \mathbf{\Lambda },\,
L_{E}\mathbf{\Lambda }=[E, \mathbf{\Lambda }]=0,
$
where [.,.] stands for
the  Schouten-Nijenhuis  bracket
\footnote{
\text{
For general p-vector fields} $
X_1\wedge...\wedge X_p$
\text{and
q-vector fields}
$Y_1\wedge...\wedge Y_q,$
 \text{the Schouten-Nijenhius bracket
 is given by}

 $[X_1\wedge...\wedge X_p, Y_1\wedge...\wedge Y_q]=
 (-1)^{p+1}
\sum\limits_{{\rm{i = 1}}}^{\rm{p}} \sum\limits_{{\rm{j = 1}}}^{\rm{q}}(-1)^{i+j}[X_i,Y_j] \wedge X_1\wedge...\wedge\hat{X_i}\wedge...\wedge X_p\wedge Y_1\wedge...\wedge\hat{Y_j}\wedge...\wedge Y_q,$
\text{for all} $X_i, Y_i \in \frak{X}(M),$
\text{where} $[X_i,Y_j]$\text{ denotes the Lie bracket of the two vector fields}
$X_i,\,Y_j$ \text{on M}
 \textcolor{red}{[\ref{Vaisman}]}.}
 \textcolor{red}{[\ref{ref4}]}.
If $(M, \mathbf{\Lambda },\mathbf{ E})$ is a  Jacobi manifold, then
the space $ ( C^{\infty}(M, \mathbb{R}), \lbrace .,.\rbrace_{ \Lambda,E})$ becomes
a local Lie algebra in the sense of Kirillov \textcolor{red}{[\ref{ref3}]} with the
 following
Jacobi bracket
\begin{equation}
\lbrace f,g\rbrace_{ \mathbf{\Lambda },\mathbf{E}}= \mathbf{\Lambda }(df,dg)+f \mathbf{E} g-g\mathbf{ E} f,\quad \forall f,g \in C^{\infty}(M).
\end{equation}
This Lie bracket is a Poisson bracket
if and only if the vector field
$E$ identically vanishes.

As shown by Lichnerowicz \textcolor{red}{ [\ref{ref4}]},
to any Jacobi manifold
  $(M, \mathbf{\Lambda },\mathbf{ E})$
  one can associate a Poisson  manifold
$(M\otimes \mathbb{R}, P)$
 with the  Poisson bivector $P$ as:
\begin{equation}\label{aghajon}
P=e^{-s}(\mathbf{\Lambda}+\partial_s\wedge\mathbf{ E})
\end{equation}
where $s$ is the coordinate on $ \mathbb{R}$. The  Poisson manifold $(M\otimes \mathbb{R}, P)$ is said to be the Poissonization of the Jacobi manifold  $(M, \mathbf{\Lambda}, \mathbf{E})$.

 Let $x^{\mu}(\mu=1, . . . ,dimM)$
 be the local coordinates chart of a  Jacobi manifold  $M$,
then the tensor field $ \mathbf{\Lambda }$, the vector field $ \mathbf{E}$ and the Jacobi bracket on $M$ can be written
as follows:
\begin{equation}\label{}
\mathbf{\Lambda } = \frac{1}{2}{\mathbf{\Lambda} ^{\mu\nu}}\partial_ {\mu}\wedge \partial_ {\nu},
\end{equation}
\begin{equation}\label{}
\mathbf{ E }= \mathbf{E}^{\mu} \partial _{\mu},
\end{equation}
\begin{equation}\label{s4}
\{ f,g\}_{\mathbf{ \Lambda, E} } = {\mathbf{\Lambda} ^{\mu\nu}}\partial_ {\mu}f\partial_ {\nu}g + f{\mathbf{ E}}^{\mu}\partial_ {\mu} g - g\mathbf{ E}^{\mu}\partial _{\mu}f,\quad \forall f,g \in C^{\infty}(M).
\end{equation}
Furthermore,  by substituting   the Jacobi bracket (\ref{s4}) in the  Jacobi identity, one can  obtain the following relations
\begin{equation}\label{s5}
{\mathbf{\Lambda} ^{\nu \rho }}{\partial _\rho }{\mathbf{\Lambda} ^{\lambda \mu } + \mathbf{\Lambda} ^{\mu \rho }{\partial _\rho }{\mathbf{\Lambda} ^{\nu \mathbf{\lambda} }} + {\mathbf{\Lambda }^{\mathbf{\lambda} \rho }}{\partial _\rho }{\mathbf{\Lambda }^{\mu \nu }} + \mathbf{E}^\mathbf{\lambda} }{\mathbf{\Lambda }^{\mu \nu }} + \mathbf{E}^{\mu }{\mathbf{\Lambda} ^{\nu \mathbf{\lambda }}} + \mathbf{ E}^{\nu }{\mathbf{\Lambda} ^{\mathbf{\lambda} \mu }} = 0,
\end{equation}
\begin{equation}\label{s6}
\mathbf{E}^{\rho }{\partial _\rho }{\mathbf{\Lambda} ^{\mu \nu }} - {\mathbf{\Lambda} ^{\rho \vartheta }}{\partial _\rho }\mathbf{ E}^{\mu } + {\mathbf{\Lambda }^{\rho \mu }}{\partial _\rho }\mathbf{ E}^{\nu } = 0.
\end{equation}
The Eqs. (\ref{s5}) and (\ref{s6} ) are called the Jacobi equations.
The general solution
for the Jacobi equations yields the general form of the Jacobi structures on a manifold $M$
\textcolor{red}{ [\ref{hass}]}.
We have obtained  the Jacobi structures on  real three-dimensional Lie  groups
as a smooth manifold in \textcolor{red}{[\ref{hassan2}]}.
To find these Jacobi structures, one must determine the vielbein
$e_{a}^{\;\;\mu}$ for the Lie
groups, and for this, one must  find the left-invariant one-form on the Lie group:
 \begin{equation}
 g^{-1}dg= e_{\;\;\mu}^{a}X_{a}dx^{\mu},\qquad \forall g\in \mathbf{G}
 \end{equation}
where $\lbrace X_{a} \rbrace$ are generators of the Lie group. All left-invariant one-forms
on  real three-dimensional Lie groups
 were previously obtained in  \textcolor{red}{ [\ref{Hemmati}]}.
Therefore, one can compute
the inverse of the vielbein
$e_{\;\;\mu}^{a}$
 (i.e.,
$e_{a}^{\;\;\mu}$ with
$e_{\;\;\mu}^{a }   e_{a}^{\;\;\nu} =\delta_{\mu} ^{\;\;\nu},  e_{\;\;\mu}^{a } e_{b}^{\;\;\mu} =\delta_{\;\;b} ^{a})
$
 for Lie groups
using  left-invariant one-forms.
Note that
the elements of
 real three-dimensional
 Lie  group $\mathbf{G}  $
 are given by
$g=e^{xX_{1}}e^{yX_{2}}e^{zX_{3}}
$
for all
$g \in\mathbf{G},$
where
  $( x, y, z )$ is
the local coordinate system on  the Lie group  $\mathbf{G}  $.
 The Jacobi structure
 $(\mathbf{G, \Lambda, E)}$
 on  the Lie group  $\mathbf{G}  $
  is
written in terms of the
 non-coordinate basis
 \footnote{
 The bases $\lbrace\hat{e}_a\rbrace$ and $\lbrace\hat{\theta}^{a}\rbrace$ are called the non-coordinate bases \textcolor{red}{ [\ref{naka}]}.
 }
  as
\begin{equation}\label{basis}
{\mathbf{\Lambda }^{\mu \nu }}= e_{a}^{\;\;\mu} e_{b}^{\;\;\nu} {\Lambda ^{ab }},
\end{equation}
\begin{equation}\label{basis2}
\mathbf{ E}^{\mu}= e_{a}^{\;\;\mu}E^{a},
\end{equation}
where
 $\Lambda^{ab} $  and $E^a$  are
   Jacobi   structures
 on Lie algebra
 $
 \mathfrak{g}$
  and
 we have assumed that these are independent
of the coordinate of  the Lie group,
 and
the indices $ \mu, \nu, \cdots $ and  $a, b, \cdots$ are respectively related to the Lie group
coordinates and the Lie algebra basis.

Taking into account
that  $\hat{e}_a= e_{a}^{\;\;\mu} \; \partial _\mu$,  then we have
\textcolor{red}{[\ref{naka}]}
\begin{equation}
[\hat{e}_a, \hat{e}_b]=\mathbf{f}_{ab}^{\;\;c}\hat{e}_c,
\end{equation}
where
 $\mathbf{f}_{ab}^{\;\;c}$ (i.e.,
 the structure constants of the Lie algebra $\mathfrak{g}$)
are related to the vielbein
$ e_{a}^{\;\;\mu}$
 by the
Maurer-Cartan
relation \textcolor{red}{[\ref{naka}]}:
\begin{equation}\label{Maurer-Cartan}
{\mathbf{f}}_{ab}^{\;\;c}=e^c_{\;\;\nu}( e_{a}^{\;\;\mu}\partial _\mu e_{b}^{\;\;\nu}-e_{b}^{\;\;\mu}\partial _\mu e_{a}^{\;\;\nu}).
\end{equation}
Inserting Eqs.
(\ref{basis}) and (\ref{basis2}) into  Eqs. (\ref{s5}) and (\ref{s6}) and using the Maurer-Cartan equation,
one can obtain \textcolor{red}{[\ref{hassan2}]}:
\begin{equation}\label{lie 17}
{{\mathbf{f}}_{bc}}^f{\Lambda ^{hb}}{\Lambda ^{ce}} + {{\mathbf{f}}_{bd}}^e{\Lambda ^{hb}}{\Lambda ^{fd}} + {{\mathbf{f}}_{ba}}^h{\Lambda ^{eb}}{\Lambda ^{af}} + {{\rm E}^f}{\Lambda ^{eh}} + {{\rm E}^e}{\Lambda ^{hf}} + {{\rm E}^h}{\Lambda ^{fe}} = 0,
\end{equation}
\begin{equation}\label{lie 16}
{{\mathbf{f}}_{ac}}^d{{\rm E}^a}{\Lambda ^{ce}} + {{\mathbf{f}}_{ab}}^e{{\rm E}^a}{\Lambda ^{db}} = 0.
\end{equation}
It is quite difficult to get results working with the tensor form of Eqs. (\ref{lie 17}) and (\ref{lie 16}); thus we  propose
using the adjoint representations of Lie algebras

\begin{equation}\label{rep}
{{\mathbf{f}}_{ab}}^c =  - {({\chi _a})_b}^c,\qquad
{{\mathbf{f}}_{ab}}^c =  - {({\cal Y}^c)_{ab}},\qquad
\end{equation}
then the Eqs.   (\ref{lie 17}) and (\ref{lie 16}) in the matrix form can be rewritten  respectively as follows \textcolor{red}{[\ref{hassan2}]}:
\begin{equation}\label{Lie al2}
 -\Big ({\Lambda ^{ce}}({\chi ^t}_c\Lambda ) + \Lambda {\cal Y}^e\Lambda  + (\Lambda \chi _b){\Lambda ^{be}} + {{\rm E}^e}\Lambda \Big) ^{fh}+ {{\rm E}^f}{\Lambda ^{eh}} + {\Lambda ^{fe}}{{\rm E}^h} = 0,
\end{equation}
\begin{equation}\label{Lie al1}
(\Lambda \chi _a - {(\Lambda {\chi _a})^t}){{\rm E}^a} = 0.
\end{equation}
 The general solution
of   Eqs. (\ref{Lie al2}) and (\ref{Lie al1}) yields the general form of the Jacobi structures.
 In order to find general solutions of these equations, one can use the $Maple$ program.
Applying  Eqs. $(\ref{basis})$ and $(\ref{basis2})$, one can  obtain the
Jacobi structures
$\mathbf{\Lambda}$ and $\mathbf{E}$
on the Lie group.
In \textcolor{red}{[\ref{hassan2}]}
we have obtained all
Jacobi structures on three-dimensional Lie algebras and
their
Lie groups.
Here,
 we will consider
 those structures such that
 after Poissonization of them
 (see relation (\ref{aghajon}) )
 the resulting
 Poisson structures
 are  nondegenerate. The results are given
in  Table 1. We will see that
only  the Lie groups
$\mathbf{\rom{2}} \otimes \mathbb{R}, ~ \mathbf{ \rom{3} }\otimes \mathbb{R}, ~ \mathbf{\rom{4} }\otimes \mathbb{R},~
\mathbf{\rom{6}_0} \otimes \mathbb{R},~ \mathbf{\rom{7}_0 }\otimes \mathbb{R}
$
 have nondegenerate Poisson structures  (see Table  1).
\subsection{Liouville-integrable
Hamiltonian systems
with symmetry Lie groups}

Let $(M^{2n},\omega_{ij})$ be a  symplectic manifold,  and let $(x_1,\cdots, x_{2n})$  be the local coordinates system
on $M^{2n}$
as a phase space.
The relationship between the Poisson bracket  on the space of smooth functions on $M^{2n}$ and the   symplectic form
$\omega_{ij}$
 is given by
$\lbrace f, g\rbrace =P^{ij}\frac{\partial f}{\partial x_i}\frac{\partial g}{\partial x_j}$
where
$P^{ij}$   is the inverse of 2-form  $\omega_{ij}.$
\begin{thm}\label{}
${\it (G. Darboux)}$ \textcolor{red}{[\ref{Bolsinov}]} For any point of a symplectic manifold $(M^{2n},\omega),$
there exists an open neighborhood possessing
canonical  coordinate $(q_1,\cdots, q_n, p_1,\cdots, p_n ) $
in which the symplectic structure $\omega$ admits  the canonical form $\omega=\sum\limits_{{\rm{i = 1}}}^{\rm{n}}dq_i\wedge dp_i$. In other words,
the canonicity condition for the symplectic structure can
 be rewritten in terms of the Poisson bracket
 as follows:
\begin{equation*}
\lbrace p_i,p_j   \rbrace=0,\qquad
\lbrace q_i,q_j   \rbrace=0,\qquad
\lbrace q_j ,p_i  \rbrace=\delta_{ij}\qquad i,j=1,\cdots, n.
\end{equation*}
\end{thm}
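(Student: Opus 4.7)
The plan is to prove Darboux's theorem by induction on the half-dimension $n$, constructing canonical coordinates one conjugate pair at a time. The base case $n=0$ is trivial, and the core of the argument is the inductive step that reduces the problem from $M^{2n}$ to a submanifold of dimension $2n-2$.

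For the inductive step at a point $x_0 \in M^{2n}$, I would first pick any smooth function $p_1$ defined near $x_0$ with $dp_1(x_0) \neq 0$. Non-degeneracy of $\omega$ yields a Hamiltonian vector field $X_{p_1}$ defined by $\iota_{X_{p_1}}\omega = -dp_1$, which is nonzero at $x_0$. Applying the flow-box (straightening) theorem to $X_{p_1}$, I obtain a smooth function $q_1$ near $x_0$ with $X_{p_1}(q_1) = 1$, which is exactly $\{q_1, p_1\} = 1$. A direct computation then shows $[X_{q_1}, X_{p_1}] = X_{\{q_1, p_1\}} = X_1 = 0$, so the two Hamiltonian flows commute.

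Next I would select a $(2n-2)$-dimensional submanifold $N$ through $x_0$ transverse to the span of $X_{q_1}$ and $X_{p_1}$, for instance the level set $\{q_1 = q_1(x_0),\ p_1 = p_1(x_0)\}$. The key linear-algebra fact is that $T_{x_0}N$ coincides with the symplectic complement of $\mathrm{span}(X_{q_1}(x_0), X_{p_1}(x_0))$, which is itself a symplectic subspace since this span is symplectic (its restricted Gram matrix has determinant $\{q_1,p_1\}^2 = 1$). Hence $\omega|_N$ is closed and non-degenerate, i.e.\ a symplectic form on $N$. The induction hypothesis then supplies canonical coordinates $(q_2, \dots, q_n, p_2, \dots, p_n)$ on a neighborhood of $x_0$ in $N$, which I extend to a neighborhood in $M$ by declaring them constant along the commuting flows of $X_{q_1}$ and $X_{p_1}$.

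The final step is to verify the canonical bracket relations for the combined system $(q_1, \dots, q_n, p_1, \dots, p_n)$. By construction, $\{q_1, p_1\} = 1$, while for $j \geq 2$ the fact that $q_j, p_j$ are constant along the flows of $X_{q_1}, X_{p_1}$ gives $\{q_1, q_j\} = \{q_1, p_j\} = \{p_1, q_j\} = \{p_1, p_j\} = 0$; the remaining brackets among $(q_j, p_j)_{j\geq 2}$ are canonical by the induction hypothesis. I expect the main obstacle to be the careful argument that $\omega|_N$ is non-degenerate and that the extension off of $N$ by the commuting flows really produces a coordinate chart with the correct brackets — both reduce to the symplectic-complement lemma and to the invariance of $\omega$ under Hamiltonian flows, but must be handled with care to avoid circular reasoning.
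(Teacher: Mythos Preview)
The paper does not prove this statement at all: Darboux's theorem is quoted as a classical result with a citation to Bolsinov--Fomenko, and no argument is given. So there is no ``paper's own proof'' to compare against.

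Your proposal is the standard Carath\'eodory-style inductive proof and is correct in outline. The steps (choose $p_1$, straighten $X_{p_1}$ to produce $q_1$, observe $[X_{q_1},X_{p_1}]=0$, pass to the level set $N=\{q_1=\mathrm{const},\,p_1=\mathrm{const}\}$, apply induction, extend by the commuting flows) are exactly how the classical proof goes. The one point you flag as a potential obstacle is real but routine: you must check that for $i,j\ge 2$ the ambient Poisson bracket $\{q_i,p_j\}_M$ agrees with the bracket $\{q_i,p_j\}_N$ computed from $\omega|_N$, not merely that the latter is canonical. This follows because the extended functions are invariant under the Hamiltonian flows of $q_1,p_1$, so their brackets are too, and on $N$ the symplectic-complement description of $T N$ forces the two brackets to coincide. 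With that verification made explicit, your argument is complete; it simply goes well beyond what the paper itself supplies.
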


 Given a dynamical system
 for which
 Lie group
 $\mathbf{G}  $
 plays the role of symmetry group
 and
symplectic manifold
 $M^{2n}$
plays the of the phase space,
one can construct
independent  dynamical  functions
$S_l= S_l (q_1,\cdots, q_n, p_1,\cdots, p_n )$  on  the phase space $M^{2n}$
  satisfying
   \textcolor{red}{[\ref{Abedi-Fardad}]}
\begin{equation}\label{canon}
\lbrace S_i, S_j  \rbrace=\sum\limits_{{\rm{n = 1}}}^{\rm{2}}
\Big(
\frac{\partial S_i}{\partial q_n}\frac{\partial S_j}{\partial p_n}
-\frac{\partial S_i}{\partial p_n}\frac{\partial S_j}{\partial q_n}
\Big)=f_{ij}^k S_k,
  \end{equation}
where $f_{ij}^k$ stand for the  structure constants of the Lie algebra $\mathfrak{g}$
associated with the symmetry   Lie  group  $\mathbf{G}$.\\
Using  the relation(\ref{canon}),
one can
find
the number of integrals  $S_l$  which commute with respect to the Poisson bracket related to the symplectic form. In other words,
$\lbrace S_i,S_j\rbrace=0,  \,i,j=1,\cdots,n;$
   such that one of the functions
 $ S_l  $ can be considered
as a Hamiltonian of the integrable system
\textcolor{red}{[\ref{Abedi-Fardad}]}.

In the following section, we will use  Jacobi structures
associated with real three-dimensional Lie groups $\mathbf{G}$  to construct the Poisson
structure
 on
 $\mathbf{G} \otimes \mathbb{R}$ (Poissonization) and then will obtain related
  integrable Hamiltonian systems. We will perform those using of
 the differential realization of real three-dimensional Lie groups  \textcolor{red}{ [\ref{realiz}]}.

\section{
 Integrable Hamiltonian systems
  by Jacobi structures on  real
  three-dimensional Lie groups
}\label{section3}
In this section, we shall consider the different representations of one
 equivalence-class  of
Jacobi structures on real three-dimensional Lie groups
$\mathbf{G}$
 for which after  Poissonization
  we have a non-degenerate Poisson bracket on  $\mathbf{G} \otimes \mathbb{R}$
  for each representation of equivalence-classes
  \footnote{
Note that previously in \textcolor{red}{ [\ref{hassan2}]}
 we have obtained all equivalence classes of Jacobi structure on a three-dimensional Lie group. Here we use only one of those equivalence classes, because on other classes after Poissonization the obtained Poisson brackets are degenerate or singular \textcolor{red}{ [\ref{hassanone}]}.}.
 To simplify the presentation, we will discuss integrable Hamiltonian systems only for
  one
  representation
 of
 equivalence classes.
\begin{eg}
  Lie group
$\mathbf{\rom{2}}$

Considering
  the  Lie group
$\mathbf{\rom{2}}$  related to  the Lie algebra
${\rom{2}}$ with non-zero commutators
$[X_2, X_3]=X_1,$
  it admits the Jacobi
structure as follows (see Table  1):
\begin{equation*}\label{moh}
\mathbf{\Lambda_{1}}=-z\partial _x\wedge \partial _z+\partial _y\wedge \partial _z,
\quad
\mathbf{E_{1}}=-\partial _x
\end{equation*}
where
  $( x, y, z )$ is
the local coordinate system on  the Lie group $\mathbf{\rom{2}}$.
Applying the Poissonization (\ref{aghajon})  of the Jacobi manifold  $(\mathbf{\rom{2}}, \mathbf{\Lambda_{1}},\mathbf{ E_{1}}),$
  it leads to the Poisson manifold
  $(\mathbf{\rom{2}} \otimes \mathbb{R}, P_1)$
    with
    the   local  coordinate system  $ (x, y, z, s) $ and
     the non-degenerate Poisson structure
\begin{equation*}
P_1:\qquad
\lbrace x,z  \rbrace=-z{e}^{-s} ,\qquad
\lbrace  x,s \rbrace={e}^{-s},\qquad
\lbrace  y,z \rbrace={e}^{-s},
\end{equation*}
 Now one can find the following Darboux
coordinates:
$$
q_1=x,\qquad
q_2=y
,\qquad
p_1={e}^{s}  ,\qquad
p_2=z{e}^{s},
$$
such that they satisfy in the following canonical  Poisson brackets:
\begin{equation*}
\lbrace q_1,p_1   \rbrace=1,\qquad
\lbrace q_2,p_2  \rbrace=1.
\end{equation*}
We now consider that the Lie algebra $\rom{2}$ is realized by means of  smooth transformations  on the
phase space $\mathbb{  R}^4$  with the canonical  coordinate $(q_1, q_2, p_1, p_2 ) $
\begin{equation*}
 S_i=X_i(q_1, q_2, p_1,p_2 ),
\end{equation*}
where
 the  differential operator of
$ p_1=-\frac{\partial}{\partial q_1}$ and $ p_2=-\frac{\partial}{\partial q_2}$
(quantum mechanical realization)
are
 the conjugate momentums to
$ q_1$
 and
 $ q_2$, respectively.
Using the results of    \textcolor{red}{ [\ref{realiz}]} ( see Table 2)
 one can get the $S_i$
 as follows:
\begin{equation*}
S_1=-p_1=-{e}^{s},\qquad
S_2=-p_2=-z {e}^{s} ,\qquad
S_3=- q_2 p_1= -y{e}^{s}.
\end{equation*}
In this  way, now
applying relation (\ref{canon}),
one can show that
 they satisfy the following Poisson brackets
\begin{equation*}
\lbrace S_2, S_3 \rbrace=S_1,
\end{equation*}
i.e. we have the integrable Hamiltonian  system with symmetry Lie group
$\mathbf{\rom{2}} $  such that one can consider
its Hamiltonian
as:
\begin{equation*}
H=S_3=-y{e}^{s}
\end{equation*}
and the invariants of the system are $(H, S_1).$
\end{eg}
\begin{eg}
  Lie group
$\mathbf{\rom{3}}$

Now consider
  the  Lie group
$\mathbf{\rom{3}}$  related to the  Lie algebra
${\rom{3}}$ with non-zero commutators
 $$[X_1, X_2]=-(X_2+X_3), [X_1, X_3]=-(X_2+X_3),$$
 it admits the Jacobi
structure as follows (see Table 1 ):
\begin{equation*}
\mathbf{\Lambda_1}=
\partial _x\wedge \partial _z
+ \left( y+z \right) \partial _y\wedge \partial _z
\qquad
\mathbf{E_1}=
\partial _y-\partial _z
 \end{equation*}
where
  $( x, y, z )$ is
the local coordinate system on  the Lie group $\mathbf{\rom{3}}$.
Applying the Poissonization (\ref{aghajon})  of the Jacobi manifold  $(\mathbf{\rom{3}},\mathbf{ \Lambda_{1}}, \mathbf{E_{1}}),$
  it leads to the Poisson manifold
  $(\mathbf{\rom{3}} \otimes \mathbb{R}, P_1)$
    with
    the   local  coordinate system  $ (x, y, z, s) $ and
     the non-degenerate Poisson structure
\begin{equation*}
P_1:\qquad
\lbrace x,z  \rbrace=e^{-s},\qquad
\lbrace  y,z \rbrace={e}^{-s}\left( y+z \right)  ,\qquad
\lbrace  y,s \rbrace=-{e}^{-s},\quad
\lbrace  z,s \rbrace={e}^{-s}.
\end{equation*}
Now one can find the following Darboux
coordinates:
\begin{equation*}
q_1=-{e}^{s}y,\qquad
q_2=x,\qquad
p_1=s,\qquad
p_2=  z {e}^{s}
\end{equation*}
such that they satisfy in the following canonical  Poisson brackets:
\begin{equation*}
\lbrace q_1,p_1   \rbrace=1,\qquad
\lbrace q_2,p_2  \rbrace=1.
\end{equation*}
We now consider that the Lie algebra $\rom{3}$ is realized by means of  smooth transformations  on the
phase space $\mathbb{  R}^4$  with the canonical  coordinate $(q_1, q_2, p_1, p_2 ) $
\begin{equation*}
 S_i=X_i(q_1, q_2, p_1,p_2 ),
\end{equation*}
where
 the  differential operator of
$ p_1=-\frac{\partial}{\partial q_1}$ and $ p_2=-\frac{\partial}{\partial q_2}$
(quantum mechanical realization)
are
 the conjugate momentums
  to
$ q_1$
 and
 $ q_2$, respectively.
Using the results of
 \textcolor{red}{ [\ref{realiz}]} ( see Table 2)
 one can get the $S_i$
 as follows:
\begin{equation*}
 S_1=-(q_1+q_2)(p_1+p_2)=
 ({e}^{s}y-x)  \left( s+ z {e}^{s}\right),  \qquad
S_2=-p_1=
-s,\qquad
S_3=-p_2=-  z  {e}^{s}.
\end{equation*}
In this  way, now
applying relation (\ref{canon}),
one can show that
 they satisfy the following Poisson brackets
\begin{equation*}
\lbrace S_1, S_2 \rbrace=-(S_2+S_3), \qquad \lbrace S_1, S_3 \rbrace=-(S_2+S_3),
\end{equation*}
i.e. we have the integrable Hamiltonian  system with symmetry Lie group
$\mathbf{\rom{3}} $  such that one can consider
its Hamiltonian
as:
\begin{equation*}
H=S_2=-s
\end{equation*}
and the invariants of the system are $(H, S_3).$
\end{eg}
\begin{eg}
  Lie group
$\mathbf{\rom{4}}$

Considering
  the  Lie group
$\mathbf{\rom{4}}$  related to the  Lie algebra
${\rom{4}}$ with non-zero commutators
$$[X_1, X_2]=-(X_2-X_3), [X_1, X_3]=-X_3,$$
 it admits the Jacobi
structure as follows   (see Table 1 ):
\begin{equation*}\label{}
\mathbf{\Lambda_1}=\partial _x\wedge \partial _y
+\left( y-z \right) \partial _y\wedge \partial _z,\quad
\mathbf{E_1}=
-\partial _y-\partial _z
\end{equation*}
where
  $( x, y, z )$ is
the local coordinate system on  the Lie group  $\mathbf{\rom{4}}$.
Applying the Poissonization  (\ref{aghajon})  of the Jacobi manifold  $(\mathbf{\rom{4}}, \mathbf{\Lambda_1}, \mathbf{E_1}),$
  it leads to the Poisson manifold
  $(\mathbf{\rom{4}} \otimes \mathbb{R}, P_1)$
    with
    the   local  coordinate system  $ (x, y, z, s) $ and
     the non-degenerate Poisson structure
\begin{equation*}
P_1:\qquad
\lbrace x,y   \rbrace={e}^{-s}\qquad
\lbrace  y,z \rbrace= ( y-z ){e}^{-s},\qquad
\lbrace  y,s \rbrace={e}^{-s},\qquad
\lbrace  z,s \rbrace={e}^{-s}.
\end{equation*}
Now one can find the following Darboux
coordinates:
\begin{equation*}
q_1=x,\qquad
q_2=y
,\qquad
p_1= \left( y-z \right) {e}^{s}
,\qquad
p_2={e}^{s}
\end{equation*}
such that they satisfy in the following canonical  Poisson brackets:
\begin{equation*}
\lbrace q_1,p_1   \rbrace=1,\qquad
\lbrace q_2,p_2  \rbrace=1.
\end{equation*}
We now consider that the Lie algebra $\rom{4}$ is realized by means of  smooth transformations  on the
phase space $\mathbb{  R}^4$  with the canonical  coordinate $(q_1, q_2, p_1, p_2 ) $
\begin{equation*}
 S_i=X_i(q_1, q_2, p_1,p_2 ),
\end{equation*}
where
 the  differential operator of
$ p_1=-\frac{\partial}{\partial q_1}$ and $ p_2=-\frac{\partial}{\partial q_2}$
(quantum mechanical realization)
are
 the conjugate
 momentums
  to
$ q_1$
 and
 $ q_2$, respectively.
Using the  results of
 \textcolor{red}{ [\ref{realiz}]} ( see Table 2)
 one can get the $S_i$
 as follows:
\begin{equation*}
 S_1=q_1(q_2-1)p_1 +q_{2}^{2} p_2=x(y-1)  \left( y-z \right)  {e}^{s}+y^2e^{s},
\end{equation*}

\begin{equation*}
S_2=-p_1= \left( -y+z \right)  {e}^{s},\qquad
S_3=-q_2 p_1=y   \left( -y+z \right) \ {e}^{s}.
\end{equation*}
In this  way, now
applying relation (\ref{canon}),
one can show that
 they satisfy the following Poisson brackets
\begin{equation*}
\lbrace S_1, S_2 \rbrace=-(S_2-S_3), \qquad \lbrace S_1, S_3 \rbrace=-S_3,
\end{equation*}
i.e. we have the integrable Hamiltonian  system with symmetry Lie group
$\mathbf{\rom{4}} $  such that one can consider
its Hamiltonian
as:
\begin{equation*}
H=S_2=   \left( -y+z \right)  {e}^{s}
\end{equation*}
and the invariants of the system are $(H, S_3).$
\end{eg}
\begin{eg}
 Lie group
$\mathbf{\rom{6}_0}$

Considering
  the  Lie group
$\mathbf{\rom{6}_0}$  related to the  Lie algebra
${\rom{6}}_0$
with non-zero commutators
$$[X_1, X_3]=X_2, \quad [X_2, X_3]=X_1,$$
 it admits the Jacobi
structure as follows  (see Table 1):
\begin{equation*}
\mathbf{\Lambda_2}=-\sinh \left( z \right)\partial _x\wedge \partial _z+
\cosh \left( z \right) \partial _y\wedge \partial _z,\qquad
\mathbf{E_2}=
-\cosh \left( z \right) \partial _x+
 \sinh \left( z \right) \partial _y,
\end{equation*}
where
  $( x, y, z )$ is
the local coordinate system on  the Lie group  $\mathbf{\rom{6}_0}$.
Applying the Poissonization  (\ref{aghajon}) of the Jacobi manifold  $(\mathbf{\rom{6}_0}, \mathbf{\Lambda_2}, \mathbf{E_2}),$
  it leads to the Poisson manifold
  $(\mathbf{\rom{6}_0} \otimes \mathbb{R}, P_2)$
    with
    the   local  coordinate system  $ (x, y, z, s) $ and
     the non-degenerate Poisson structure
\begin{equation*}
P_2:\qquad
\lbrace x,z  \rbrace=-{e}^{-s} \sinh \left( z \right) ,\qquad
\lbrace  x,s \rbrace={e}^{-s}\cosh \left( z \right) ,\qquad
\lbrace  y,z \rbrace={e}^{-s} \cosh \left( z \right)  ,\qquad
\lbrace  y,s \rbrace=-{e}^{-s} \sinh \left( z \right).
\end{equation*}
Now one can find the following Darboux
coordinates:
\begin{equation*}
q_1=x,\qquad
q_2=y
,\qquad
p_1=\cosh \left( z \right) {e}^{s}
 ,\qquad
p_2=\sinh \left( z
 \right) {e}^{s}
\end{equation*}
such that they satisfy in the following canonical  Poisson brackets:
\begin{equation*}
\lbrace q_1,p_1   \rbrace=1,\qquad
\lbrace q_2,p_2  \rbrace=1.
\end{equation*}
We now consider that the Lie algebra $\rom{6}_0$ is realized by means of  smooth transformations  on the
phase space $\mathbb{  R}^4$  with the canonical  coordinate $(q_1, q_2, p_1, p_2 ) $
\begin{equation*}
 S_i=X_i(q_1, q_2, p_1,p_2 ),
\end{equation*}
where
 the  differential operator of
$ p_1=-\frac{\partial}{\partial q_1}$ and $ p_2=-\frac{\partial}{\partial q_2}$
(quantum mechanical realization)
are
 the conjugate
 momentums
  to
$ q_1$
 and
 $ q_2$, respectively.
Using the  results of
 \textcolor{red}{ [\ref{realiz}]} ( see Table 2)
 one can get the $S_i$
 as follows:

\begin{equation*}
S_1=-p_1= -\cosh \left( z \right) {e}^{s}
,\qquad
S_2=-p_2=-\sinh \left( z
 \right) {e}^{s}
\end{equation*}

\begin{equation*}
S_3=- q_2 p_1-q_1p_2=
-y\cosh \left( z \right) {e}^{s}
-x\sinh \left( z
 \right) {e}^{s}.
\end{equation*}
In this  way, now
applying relation (\ref{canon}),
one can show that
 they satisfy the following Poisson brackets
\begin{equation*}
\lbrace S_1, S_3 \rbrace=S_2, \qquad \lbrace S_2, S_3 \rbrace=S_1,
\end{equation*}
i.e. we have the integrable Hamiltonian  system with symmetry Lie group
$\mathbf{\rom{6}_0} $  such that one can consider
its Hamiltonian
as:
\begin{equation*}
H=S_1=-\cosh \left( z \right) {e}^{s}
\end{equation*}
and the invariants of the system are $(H, S_2).$
\end{eg}
\begin{eg}
  Lie group
$\mathbf{\rom{7}_0}$

Considering
  the  Lie group
$\mathbf{\rom{7}_0}$ related to the  Lie algebra
${\rom{7}}_0$ with non-zero commutators
$$[X_1, X_3]=-X_2, [X_2, X_3]=X_1,$$
 it admits the Jacobi
structure as follows  (see Table 1 ):
\begin{align*}\label{}
\mathbf{\Lambda_1}=-\sin \left(z \right) \partial _x\wedge \partial _z+\cos \left(z \right)\partial _y\wedge \partial _z,\quad
\quad
\mathbf{ E_1}=
-\cos \left( z \right) \partial _x-\sin \left( z \right)
\partial _y,
\end{align*}
where
  $( x, y, z )$ is
the local coordinate system on  the Lie group  $\mathbf{\rom{7}_0}$.
Applying the Poissonization  (\ref{aghajon})  of the Jacobi manifold  $(\mathbf{\rom{7}_0}, \mathbf{\Lambda_1}, \mathbf{E_1}),$
  it leads to the Poisson manifold
  $(\mathbf{\rom{7}_0} \otimes \mathbb{R}, P_1)$
    with
    the   local  coordinate system  $ (x, y, z, s) $ and
     the non-degenerate Poisson structure
\begin{equation*}
P_1:\qquad
\lbrace x,z  \rbrace=-{e}^{-s} \sin \left( z \right)  ,\qquad
\lbrace  y,z \rbrace={e}^{-s} \cos \left( z \right)  ,\qquad
\lbrace x,s \rbrace={e}^{-s} \cos \left( z \right)  ,\qquad
 \lbrace  y,s \rbrace={e}^{-s} \sin \left( z \right).
\end{equation*}
Now one can find the following Darboux
coordinates:
\begin{equation*}
q_1=x,\qquad
q_2=y
,\qquad
p_1= \cos \left( z
 \right)  {e}^{s}
  ,\qquad
p_2={e}^{s} \sin
 \left( z \right)
\end{equation*}
such that they satisfy in the following canonical  Poisson brackets:
\begin{equation*}
\lbrace q_1,p_1   \rbrace=1,\qquad
\lbrace q_2,p_2  \rbrace=1.
\end{equation*}
We now consider that the Lie algebra $\rom{7}_0$ is realized by means of  smooth transformations  on the
phase space $\mathbb{  R}^4$  with the canonical  coordinate $(q_1, q_2, p_1, p_2 ) $
\begin{equation*}
 S_i=X_i(q_1, q_2, p_1,p_2 ),
\end{equation*}
where
 the  differential operator of
$ p_1=-\frac{\partial}{\partial q_1}$ and $ p_2=-\frac{\partial}{\partial q_2}$
(quantum mechanical realization)
are
 the conjugate
 momentums
  to
$ q_1$
 and
 $ q_2$, respectively.
Using the result results of
 \textcolor{red}{ [\ref{realiz}]} ( see Table 2)
 one can get the $S_i$
 as follows:
\begin{equation*}
S_1=-p_1=-\cos \left( z
 \right)  {e}^{s}
,\qquad
S_2=-p_2=-{e}^{s}\sin
 \left( z \right)
\qquad
\end{equation*}
\begin{equation*}
S_3=- q_2 p_1+q_1p_2=
-y\cos \left( z
 \right) {e}^{s}
+x{e}^{s}\sin
 \left( z \right).
\end{equation*}
In this  way, now
applying relation (\ref{canon}),
one can show that
 they satisfy the following Poisson brackets
\begin{equation*}
\lbrace S_1, S_3 \rbrace=-S_2, \qquad \lbrace S_2, S_3 \rbrace=S_1,
\end{equation*}
i.e. we have the integrable Hamiltonian  system with symmetry Lie group
$\mathbf{\rom{7}_0} $  such that one can consider
its Hamiltonian
as:
\begin{equation*}
H=S_1=-\cos \left( z
 \right) {e}^{s}
\end{equation*}
and the invariants of the system are $(H, S_2).$
\end{eg}
\section{
  Bi-Hamiltonian systems
  by Jacobi structures on  real
  three-dimensional Lie groups
  }
 The study of bi-Hamiltonian
systems started with the pioneering work by Franco Magri \textcolor{red}{ [\ref{Magri}]}.
\begin{dfn}
 A pair $(P_1, P_2)$ of
Poisson structures on M
  is
said to be compatible
 if
 \textcolor{red}{ [\ref{Magri}]}
 \begin{equation}
 [P_1, P_1]=[P_2, P_2]=[P_1, P_2]=0,
 \end{equation}
 where $[.,.]$ is
 the Schouten–Nijenhuis bracket and the resulting bracket is
 the three vectors
 such that
their components
$[P_i, P_j]^{BCD}$
have the following forms:
 \footnote{
 Here
 we use  Einstein's summation
convention.
 }
 \begin{equation}
 [P_i, P_j]^{BCD}=P_i^{AB}\partial _A P_j^{CD}+
 P_i^{AD}\partial _A P_j^{BC}+
P_i^{AC}\partial _AP_j^{DB}.
 \end{equation}
\end{dfn}
\begin{dfn}
The manifold M equipped with  compatible Poisson structures $P_1$ and $P_2$ is called
the bi-Hamiltonian manifold.
\end{dfn}
\begin{dfn}
  A bi-Hamiltonian system is a dynamical system
possessing two compatible Hamiltonian formulations.
\end{dfn}

\begin{thm}\label{}
Let
 $(\Lambda^\prime,E^\prime)$
 and
$(\Lambda, E)$
 be
 two  Jacobi structures.
If there
exists an automorphism  $A  $  of  the Lie algebra
$\mathfrak{g}$
 such that
 \begin{equation}\label{p100}
\Lambda^\prime=A^t \Lambda A,
\end{equation}
 and
  \begin{equation}\label{r100}
E^{\prime^{e}}=E^{b} A_b^{\,\,e},
\end{equation}
then the Jacobi structures
$(\Lambda^\prime,E^\prime)$
 and
$(\Lambda, E)$
are equivalent.
\end{thm}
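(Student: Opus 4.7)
The plan is to verify that the transformation $\Lambda\mapsto A^t\Lambda A$, $E\mapsto E A$ sends solutions of the matrix Jacobi equations (\ref{Lie al2}) and (\ref{Lie al1}) to solutions, and then interpret this as equivalence via the Lie group automorphism that integrates $A$. Since $A$ is a Lie algebra automorphism, the identity $A[X,Y]=[AX,AY]$ translates to the covariance of structure constants $f_{ab}^{\;\;c}A_c^{\;\;d}=A_a^{\;\;e}A_b^{\;\;f}f_{ef}^{\;\;d}$, and through (\ref{rep}) to matrix identities of the form $A^{-1}\chi_a A=(A^{-1})_a^{\;\;b}\chi_b$ and a dual relation for $\mathcal{Y}^c$. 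These covariance relations are the main algebraic tool.

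First I would substitute $\Lambda'=A^t\Lambda A$ and $E'^{\,e}=E^b A_b^{\;\;e}$ into the left-hand side of (\ref{Lie al2}). Each bracketed term, for example $\Lambda'\mathcal{Y}^e\Lambda'$ or $(\Lambda'\chi_b)\Lambda'^{\,be}$, turns into a sandwich with outer factors $A^t$ and $A$. Using the two covariance identities above, together with the transformation of $\mathcal{Y}^e$ under $A$, these outer factors can be pulled out and the middle of the expression reassembles into the corresponding term of (\ref{Lie al2}) for the unprimed pair $(\Lambda,E)$. Summing and using that $(\Lambda,E)$ is a Jacobi structure gives zero, so (\ref{Lie al2}) also holds for $(\Lambda',E')$. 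A shorter analogous computation, based only on the $\chi_a$-covariance, handles (\ref{Lie al1}) since it is linear in $\Lambda$ and in $E$. This shows that $(\Lambda',E')$ is again a Jacobi structure on $\mathfrak{g}$.

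Once invariance of the Jacobi equations is established, equivalence of the Jacobi structures follows from the standard fact that $A$, being an automorphism of $\mathfrak{g}$, integrates to an automorphism $\Phi$ of the (simply connected cover of the) Lie group $\mathbf{G}$. Under $\Phi$, the vielbein $e_a^{\;\;\mu}$ is transformed by $A$ on its algebra index, so the bivector and vector field constructed via (\ref{basis}) and (\ref{basis2}) from components $(\Lambda,E)$ are carried by $\Phi^*$ to those constructed from $(\Lambda',E')$ — which is the very meaning of equivalence of Jacobi structures on $\mathbf{G}$. The main computational obstacle lies in the first step: the term $\Lambda^{ce}({\chi^t}_c\Lambda)$ in (\ref{Lie al2}) contracts the free index $e$ of $\Lambda$ with a component label on a matrix-valued quantity, so the substitution produces an extra $A$-factor that must be absorbed using the adjoint covariance relation simultaneously with the one that handles $\Lambda\chi_b\Lambda^{be}$; keeping the indices straight and showing that all of these reassemble cleanly into the original unprimed equation is where the work concentrates.
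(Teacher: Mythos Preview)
The paper does not actually prove this theorem here; it simply cites reference~[\ref{hassan2}] for the proof, so there is no in-paper argument to compare against. Your outline is substantively correct, but your emphasis is inverted relative to what the statement actually requires.

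By hypothesis, \emph{both} $(\Lambda,E)$ and $(\Lambda',E')$ are Jacobi structures. Consequently the long first step you describe---checking that $\Lambda'=A^t\Lambda A$, $E'=EA$ again solves (\ref{Lie al2}) and (\ref{Lie al1})---is not needed at all, and the index gymnastics you flag as ``the main computational obstacle'' never has to be carried out. The entire content of the theorem sits in what you treat as a quick afterthought: the Lie algebra automorphism $A$ integrates (on the simply connected cover) to a Lie group automorphism $\Phi$, and since the group-level structures are built from left-invariant frames via (\ref{basis})--(\ref{basis2}), one has $\Phi_*\hat e_a=A_a^{\;\;b}\hat e_b$, hence $\Phi_*\mathbf{\Lambda}=\tfrac12\Lambda^{ab}A_a^{\;\;c}A_b^{\;\;d}\,\hat e_c\wedge\hat e_d=\mathbf{\Lambda}'$ and $\Phi_*\mathbf{E}=\mathbf{E}'$. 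That pushforward identity \emph{is} the equivalence. Your step~2 already contains this argument; just promote it to the whole proof and drop step~1.
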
\label{}
\begin{proof}
The proof is given in \textcolor{red}{ [\ref{hassan2}]}.
\end{proof}

\begin{lem}\label{}
Suppose that $(M\times \mathbb{R}, P_1)$
 is the Poissonization of the Jacobi manifold  $(M, \mathbf{\Lambda}, \mathbf{E})$
and
$(M\times \mathbb{R}, P_2)$
 is the Poissonization of the Jacobi manifold  $(M, \mathbf{\Lambda}^\prime, \mathbf{E}^\prime)$,
 and
 Jacobi structures $(\mathbf{\Lambda}, \mathbf{E})$ and $(\mathbf{\Lambda}^\prime, \mathbf{E}^\prime)$ are equivalent.
 In the general case,
   structures $P_1$ and $P_2$ are not
 compatible Poisson structures.
\end{lem}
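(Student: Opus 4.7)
The lemma is a non-existence assertion, so it suffices to exhibit a single pair of equivalent Jacobi structures whose Poissonizations fail the compatibility condition $[P_1,P_2]=0$. The plan is to fix one of the five Lie groups listed in Section \ref{section3}, apply a non-trivial Lie algebra automorphism $A$ to one of the explicit Jacobi structures $(\Lambda, E)$, use Theorem 4.4 to produce the equivalent pair $(\Lambda', E')$, and then check by direct calculation that the two resulting Poisson bivectors $P_1$ and $P_2$ are not compatible.

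In more detail, I would take for instance $\mathbf{G}=\mathbf{\rom{2}}$ with the Jacobi structure $(\Lambda_1, E_1)$ of Example 3.1, for which $P_1$ on $\mathbf{\rom{2}}\otimes\mathbb{R}$ is already recorded. Since $X_1$ is central in Lie algebra $\mathrm{\rom{2}}$, the automorphism group is rich; a convenient choice is the rescaling $A(X_1)=\lambda X_1$, $A(X_2)=X_2$, $A(X_3)=\lambda X_3$ with $\lambda\neq 1$. Applying (\ref{p100})--(\ref{r100}) yields an equivalent $(\Lambda', E')$ on the Lie algebra; then (\ref{basis})--(\ref{basis2}) transport it back to $\mathbf{\rom{2}}$, and (\ref{aghajon}) produces the Poissonization $P_2$ on $\mathbf{\rom{2}}\otimes\mathbb{R}$. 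Compatibility of $P_1$ and $P_2$ is then tested by expanding the coordinate formula of Definition 4.1 for $[P_1,P_2]^{BCD}$ and identifying one component (for example $(x,z,s)$) that does not vanish.

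A conceptually cleaner route makes the obstruction transparent: because
\[
\alpha P_1+\beta P_2 \;=\; e^{-s}\bigl((\alpha\Lambda+\beta\Lambda')+\partial_s\wedge(\alpha E+\beta E')\bigr)
\]
is itself a Poissonization, compatibility of $(P_1,P_2)$ is equivalent to $(\alpha\Lambda+\beta\Lambda',\,\alpha E+\beta E')$ being a Jacobi structure for all scalars $\alpha,\beta$. Expanding this via $[\Lambda,\Lambda]=2E\wedge\Lambda$ and $L_E\Lambda=0$ yields the bilinear cross-conditions $[\Lambda,\Lambda']=E\wedge\Lambda'+E'\wedge\Lambda$ and $L_E\Lambda'+L_{E'}\Lambda=0$, neither of which is enforced by the equivalence of Theorem 4.4. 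Showing that the rescaling $A$ above violates at least one of these identities therefore proves the lemma without a full component-by-component Schouten calculation.

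The main obstacle is simply to choose $A$ non-trivial enough that $\Lambda'\neq\Lambda$ or $E'\neq E$ (otherwise $P_1=P_2$ and compatibility is vacuous), yet simple enough to keep the bookkeeping manageable. Once such an $A$ is fixed, the $\partial_s$-contributions arising from the $e^{-s}$ factor in $P_1$ paired with the $\Lambda'$-part of $P_2$ (and symmetrically) will not cancel the cross-terms, and a single surviving component of $[P_1,P_2]$ suffices to conclude.
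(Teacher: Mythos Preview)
Your overall strategy is sound and in the same spirit as the paper's: both aim to exhibit a nonzero component of $[P_1,P_2]$. The paper's own argument is terse---it writes $P_1,P_2$ in vielbein components using the automorphism $A$ from Theorem~4.4 and then simply asserts that $[P_1,P_2]^{\lambda s\mu}\neq 0$ for generic $A$, without carrying out the computation. Your reformulation in the third paragraph, recasting compatibility of $(P_1,P_2)$ as the Jacobi condition for $(\Lambda+\Lambda',\,E+E')$ and isolating the cross-identities
\[
[\Lambda,\Lambda']=E\wedge\Lambda'+E'\wedge\Lambda,\qquad L_E\Lambda'+L_{E'}\Lambda=0,
\]
is a genuine improvement over the paper: it explains \emph{why} equivalence via an automorphism has no reason to enforce these, and it localizes the obstruction on $M$ rather than on $M\times\mathbb{R}$.

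There is, however, a concrete flaw in the example you propose. For the Lie algebra $\rom{2}$ with $\Lambda_1=\partial_y\wedge\partial_z$ and $E_1=-\partial_x$ (algebra level), the diagonal automorphism $A=\mathrm{diag}(\lambda,1,\lambda)$ gives $\Lambda'=A^t\Lambda_1 A=\lambda\,\partial_y\wedge\partial_z$ and $E'=-\lambda\,\partial_x$, so $(\Lambda',E')=\lambda(\Lambda_1,E_1)$ and hence $P_2=\lambda P_1$; this pair is trivially compatible and furnishes no counterexample. Worse, Example~4.6 shows that on $\mathbf{\rom{2}}$ the four listed equivalent representatives produce \emph{mutually compatible} Poissonizations $P_1,\dots,P_4$, so $\mathbf{\rom{2}}$ is a poor testing ground for the lemma. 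To repair your argument, either choose an automorphism that genuinely mixes the generators and then verify that one of the two cross-identities above fails, or switch to another group and check those identities there; your own criterion makes this check quick once a workable $A$ is fixed.
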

\begin{proof}
By the definition of the Poissonization  of the Jacobi manifold  $(M, \mathbf{\Lambda}, \mathbf{E})$,
 we have
$$P_1=e^{-s}(\mathbf{\Lambda}+\partial_s\wedge\mathbf{ E})=
e^{-s}\mathbf{\Lambda}^{\rho\lambda} \partial_\rho\wedge \partial_\lambda
+e^{-s}\mathbf {E}^{\lambda} \partial_s
\wedge \partial_\lambda,
$$
such that
$P_1^{\rho\lambda}=e^{-s}\mathbf{\Lambda}^{\rho\lambda}
=e^{-s}
e_
{i}^{~\rho}
e_
{j}^{~\lambda}\Lambda^{ij}
$
and
$P_1^{s\lambda}=e^{-s}\mathbf {E}^{\lambda}=
e^{-s}e_
{k}^{~\lambda}E^k.$
Moreover,
by definition of the Poissonization  of the Jacobi manifold
 $(M, \mathbf{\Lambda}^\prime, \mathbf{E}^\prime)$, and using (\ref{p100}), (\ref{r100}),
 we have
\begin{equation*}
P_2=e^{-s}(\mathbf{\Lambda}^\prime+\partial_s\wedge\mathbf{ E}^\prime)=
e^{-s}(A^t\mathbf{\Lambda}A+\partial_s\wedge\mathbf{ E}A)=
e^{-s}
(A^t)_{~b}^{a}{\Lambda}^{bc}
A_
{c}^{~d}
e_
{a}^{~\mu}
e_
{d}^{~\nu}
\partial_\mu\wedge\partial_\nu
+e^{-s}
e_{f}^{~\mu}{ E}^k A_k^{~f}\partial_s\wedge\partial_\mu.
\end{equation*}
Here we employ also the notation
$P_2^{\mu \nu}=
e^{-s}
(A^t)_{~b}^{a}{\Lambda}^{bc}
A_
{c}^{~d}
e_
{a}^{~\mu}
e_
{d}^{~\nu}$
and
$P_2^{s\mu}=e^{-s}
e_{f}^{~\mu}{ E}^k A_k^{~f}.$
 One can
show that
$[P_1, P_2]^{BCD}\neq 0,$
 for $A=(s,\rho), B=\lambda, C=s, D=\mu$.
\end{proof}\\
 However, in the following,
we
will find examples  where
 structures $P_1$ and $P_2$
  are
 compatible.
Note that these
 Poisson structures
are obtained from
  the Poissonization of the
Jacobi structures.
Investigation  of general
Poisson structures
 on  real four-dimensional  Lie groups
 is previously studied in
\textcolor{red}{[\ref{Magriabd}]}.
\begin{eg}
  Lie group
$\mathbf{\rom{2}}$

In Example 3.1,
applying the Poissonization of the Jacobi manifold  $(\mathbf{\rom{2}}, \mathbf{\Lambda_1}, \mathbf{E_1}),$ we show that
  it leads to the Poisson manifold
  $(\mathbf{\rom{2}} \otimes \mathbb{R}, P_1)$
    with
    the   local  coordinate system  $ (x, y, z, s) $ and
     the non-degenerate Poisson structure:
\begin{align*}
P_1:\qquad
\lbrace x,z  \rbrace_1=-z{e}^{-s} ,\qquad
\lbrace  x,s \rbrace_1={e}^{-s},\qquad
\lbrace  y,z \rbrace_1={e}^{-s}.
\end{align*}
Now one can consider other representations of the same
 equivalence class of Jacobi structures on
 Lie group
 $\mathbf{\rom{2}}$ (see Table 1)
  for which after Poissonization one can obtain the  following non-degenerate Poisson structures
  $ (P_2, P_3, P_4)$ on
 $\mathbf{\rom{2}} \otimes \mathbb{R}$:
\begin{align*}
P_2:\qquad \lbrace x,z  \rbrace_2=(1-z){e}^{-s} ,\qquad
\lbrace  x,s \rbrace_2={e}^{-s},\qquad
\lbrace  y,z \rbrace_2={e}^{-s},\qquad\qquad\qquad\quad~\\
P_3:\qquad \lbrace x,y   \rbrace_3={e}^{-s},\qquad~~~
\lbrace x,z  \rbrace_3=-z{e}^{-s} ,\qquad
\lbrace  x,s \rbrace_3={e}^{-s},\qquad
\lbrace  y,z \rbrace_3={e}^{-s},\\
P_4:\qquad\lbrace x,y   \rbrace_4={e}^{-s},~~~~
\lbrace x,z  \rbrace_4=(1-z){e}^{-s} ,\qquad
\lbrace  x,s \rbrace_4={e}^{-s},\qquad
\lbrace  y,z \rbrace_4={e}^{-s},
\end{align*}
 such that the above structures   are    compatible with each other:
\begin{align*}
[P_i, P_j]=0
,\qquad i, j=1,2,3,4.
\end{align*}
Note that  there are other
 non-degenerate Poisson structures
on
 Lie group
 $\mathbf{\rom{2}} \otimes \mathbb{R}$.
 These structures
 can  be calculated
from relation
(\ref{Lie al2})
with
$E=0$.
After the simple calculations, one can obtain
 the following
 representation  (see Appendix and Table 3)
\begin{equation*}
\mathbf{ P_1^\prime}= \partial _{x}\wedge \partial_{ y}
+(1-z)\partial _{x}\wedge \partial_{ s}+
\partial _{y}\wedge \partial_{ s}
 +\partial _{z}\wedge \partial_{ s}
\end{equation*}
all other    compatible  Poisson structures on
$\mathbf{\rom{2}} \otimes \mathbb{R} ~ (i.e., \mathbf{P_2^\prime},  \cdots, \mathbf{P_{12}^\prime})$
 which are  not compatible with
$(P_1,P_2,P_3,P_4)$
are given in  Table 3 of the Appendix.
\end{eg}
\begin{eg}
  Lie group
$\mathbf{\rom{3}}$

In Example 3.2,
applying the Poissonization of the Jacobi manifold  $(\mathbf{\rom{3}}, \mathbf{\Lambda_1}, \mathbf{E_1}),$
 we show that
  it leads to the Poisson manifold
  $(\mathbf{\rom{3}} \otimes \mathbb{R}, P_1)$
    with
    the   local  coordinate system  $ (x, y, z, s) $ and
     the non-degenerate Poisson structure :
\begin{equation*}
P_1:\qquad
\lbrace x,z  \rbrace_1=e^{-s},\qquad
\lbrace  y,z \rbrace_1=\left( y+z \right){e}^{-s},\qquad
\lbrace  y,s \rbrace_1={e}^{-s},\quad
\lbrace  z,s \rbrace_1={e}^{-s}.
\end{equation*}
Now one can consider other representations of the
same
 equivalence class of Jacobi structures on
 Lie group
 $\mathbf{\rom{3}}$ (see Table 1)
  for which after Poissonization can obtain the  following non-degenerate Poisson structures
  $ (P_2, P_3, P_4)$ on
 $\mathbf{\rom{3}} \otimes \mathbb{R}$:
\begin{align*}
P_2:\qquad
\lbrace x,y   \rbrace_2={e}^{-s},\qquad
\lbrace y,z  \rbrace_2=-(y+z){e}^{-s} ,\qquad
\lbrace  y,s \rbrace_2={e}^{-s},\qquad
\lbrace  z,s \rbrace_2=-{e}^{-s}.\quad\\
P_3:\qquad
\lbrace x,z   \rbrace_3={e}^{-s},\qquad~~~
\lbrace y,z  \rbrace_3=(y+z+1){e}^{-s} ,\qquad
\lbrace  y,s \rbrace_3={e}^{-s},\qquad
\lbrace  z,s \rbrace_3={e}^{-s}.\\
P_4:\qquad
\lbrace x,y   \rbrace_4={e}^{-s},~~~~
\lbrace y,z  \rbrace_4=-(y+z-1){e}^{-s} ,\qquad
\lbrace  y,s \rbrace_4={e}^{-s},\qquad
\lbrace  z,s \rbrace_4=-{e}^{-s},
\end{align*}
such that the above structures   are    compatible with each other:
\begin{align*}
[P_i, P_j]=0
,\qquad i, j=1,2,3,4.
\end{align*}
\end{eg}
\begin{eg}
  Lie group
$\mathbf{\rom{4}}$

In Example 3.3,
applying the Poissonization of the Jacobi manifold  $(\mathbf{\rom{4}},\mathbf{ \Lambda_1}, \mathbf{E_1}),$ we show that it leads to the Poisson manifold
  $(\mathbf{\rom{4}} \otimes \mathbb{R}, P_1)$
    with
    the   local  coordinate system  $ (x, y, z, s) $ and
     the non-degenerate Poisson structure:
\begin{align*}
P_1:\qquad
\lbrace x,y   \rbrace_1={e}^{-s},\qquad
\lbrace  y,z \rbrace_1= ( y-z ){e}^{-s},\qquad
\lbrace  y,s \rbrace_1={e}^{-s},\qquad
\lbrace  z,s \rbrace_1={e}^{-s}.\qquad\qquad\qquad
\end{align*}
Now one can consider other representations of the
  same equivalence class of Jacobi structures on
 Lie group
 $\mathbf{\rom{4}}$ (see Table 1)
  for which after Poissonization  one can obtain the  following non-degenerate Poisson structures
  $ (P_2, P_3, P_4)$ on
 $\mathbf{\rom{4}} \otimes \mathbb{R}$:
\begin{align*}
P_2:\quad
\lbrace x,y   \rbrace_2={e}^{-s},\qquad
\lbrace  y,z \rbrace_2= {e}^{-s},\qquad
\lbrace  y,s \rbrace_2={e}^{-s},\qquad
\lbrace  z,s \rbrace_2={e}^{-s}.\qquad\qquad\qquad\qquad\qquad\qquad\qquad\\
P_3:\quad
\lbrace x,y   \rbrace_3={e}^{-s},\qquad
\lbrace  x,z \rbrace_3= {e}^{-s},\qquad
\lbrace  y,z \rbrace_3=( 2y-z ){e}^{-s},\qquad
\lbrace  y,s \rbrace_3={e}^{-s},\qquad
\lbrace  z,s \rbrace_3= 2{e}^{-s}.\qquad\\
P_4:\qquad
\lbrace x,y   \rbrace_4={e}^{-s},\quad
\lbrace  x,z \rbrace_4= {e}^{-s},\quad
\lbrace  y,z \rbrace_4=( 2y-z+1 ){e}^{-s},\quad
\lbrace  y,s \rbrace_4={e}^{-s},\qquad
\lbrace  z,s \rbrace_4= 2{e}^{-s},\qquad
\end{align*}
 such that the above structures   are    compatible with each other:
\begin{align*}
[P_i, P_j]=0
,\qquad i, j=1,2,3,4.
\end{align*}
\end{eg}
\begin{eg}
  Lie group
$\mathbf{\rom{6}_0}$

In Example 3.4,
applying the Poissonization of the Jacobi manifold  $(\mathbf{\rom{6}_0}, \mathbf{\Lambda_{2}}, \mathbf{E_{2}}),$ we show that
  it leads to the Poisson manifold
  $(\mathbf{\rom{6}_0} \otimes \mathbb{R}, P_2)$
    with
    the   local  coordinate system  $ (x, y, z, s) $ and
     the non-degenerate Poisson structure:
\begin{align*}
P_2:\qquad
\lbrace x,z  \rbrace_2=-{e}^{-s} sinh(z),\qquad
\lbrace y,z  \rbrace_2={e}^{-s} cosh(z),\qquad
\lbrace  x,s \rbrace_2={e}^{-s} cosh(z),\qquad
\lbrace  y,s \rbrace_2=-{e}^{-s}sinh(z).
\end{align*}
Now one can consider other representations of the  same
 equivalence class of Jacobi structures on
 Lie group
 $\mathbf{\rom{6}_0}$ (see Table 1)
  for which after Poissonization one can obtain the  following non-degenerate Poisson structures
  $ (P_1, P_3, P_4)$ on
 $\mathbf{\rom{6}_0} \otimes \mathbb{R}$:
\begin{align*}
P_1:\qquad
\lbrace x,z  \rbrace_1={e}^{-s}  cosh(z),\qquad
\lbrace y,z  \rbrace_1=-{e}^{-s}sinh(z),\qquad
\lbrace  x,s \rbrace_1=-{e}^{-s} sinh(z),\qquad
\lbrace  y,s \rbrace_1={e}^{-s} cosh(z),\qquad\qquad\qquad\quad\qquad\qquad\qquad\quad
\\
P_3:\qquad~
\lbrace x,y   \rbrace_3={e}^{-s},\qquad\qquad\quad
\lbrace x,z  \rbrace_3=e^{-s}cosh(z) ,\qquad
\lbrace  y,z \rbrace_3=-e^{-s}sinh(z),\qquad\qquad\qquad\qquad
\qquad\qquad\qquad\qquad\qquad\qquad\qquad\qquad\qquad
\\
\lbrace  x,s \rbrace_3=-e^{-s}sinh(z),\qquad\qquad\qquad
\lbrace  y,s \rbrace_3=e^{-s}cosh(z),\qquad\qquad\qquad\qquad\qquad\qquad\qquad
\qquad\qquad\qquad\qquad\qquad\qquad\qquad\qquad\qquad
\\
P_4:\qquad\quad
\lbrace x,y   \rbrace_4={e}^{-s},\qquad\qquad
\lbrace x,z  \rbrace_4=-e^{-s}sinh(z) ,\qquad\qquad
\lbrace  y,z \rbrace_4=e^{-s}cosh(z),\qquad
\qquad\qquad\qquad\qquad\qquad\qquad\qquad\qquad\qquad\qquad\qquad
\\
\lbrace  x,s \rbrace_4=e^{-s}cosh(z),\qquad
\lbrace  y,s \rbrace_4=-e^{-s}sinh(z),\qquad\qquad\qquad\qquad\qquad\qquad\qquad\qquad\qquad
\qquad\qquad\qquad\qquad\qquad\qquad\qquad
\qquad\qquad\qquad
\end{align*}
 such that the above structures   are    compatible with each other:
\begin{align*}
[P_i, P_j]=0
,\qquad i, j=1,2,3,4.
\end{align*}
\end{eg}
\begin{eg}
 Lie group
$\mathbf{\rom{7}_0}$

In Example 3.5,
applying the Poissonization of the Jacobi manifold  $(\mathbf{\rom{7}_0}, \mathbf{\Lambda_{1}}, \mathbf{E_{1}}),$ we show that
  it leads to the Poisson manifold
  $(\mathbf{\rom{7}_0} \otimes \mathbb{R}, P_1)$
    with
    the   local  coordinate system  $ (x, y, z, s) $ and
     the non-degenerate Poisson structure:
\begin{align*}
P_1:\qquad
\lbrace x,z  \rbrace_1=-{e}^{-s} sin(z),\qquad
\lbrace y,z  \rbrace_1={e}^{-s} cos(z),\qquad
\lbrace  x,s \rbrace_1={e}^{-s} cos(z),\qquad
\lbrace  y,s \rbrace_1={e}^{-s}sin(z).
\end{align*}
Now one can consider other representations of the  same
 equivalence class of Jacobi structures on
 Lie group
 $\mathbf{\rom{7}_0}$ (see Table 1)
  for which after Poissonization  one can obtain the  following non-degenerate Poisson structures
  $ (P_2, P_3, P_4,P_5, P_6)$ on
 $\mathbf{\rom{7}_0} \otimes \mathbb{R}$:
\begin{align*}
P_2:\quad
\lbrace x,z  \rbrace_2={e}^{s}  cos(z),\quad
\lbrace y,z  \rbrace_2={e}^{-s}sin(z),\quad
\lbrace  x,s \rbrace_2={e}^{-s} sin(z),\quad
\lbrace  y,s \rbrace_2=-{e}^{-s} cos(z).\qquad\qquad\qquad\quad\quad
\\
P_3:\quad
\lbrace x,z   \rbrace_3={e}^{-s}( cos(z)-sin(z)),\qquad
\lbrace y,z  \rbrace_3={e}^{-s}( cos(z)+sin(z)),\qquad
\lbrace  x,s \rbrace_3={e}^{-s}( cos(z)+sin(z)),\qquad\\
\lbrace  y,s \rbrace_3=-{e}^{-s}( cos(z)-sin(z)).\qquad\qquad\qquad\qquad\qquad\qquad\qquad\qquad
\qquad\qquad\qquad\qquad\qquad\qquad\qquad\qquad~\\
P_4:\quad
\lbrace x,y   \rbrace_4={e}^{-s},\quad
\lbrace x,z  \rbrace_4=-{e}^{-s} sin(z),\quad
\lbrace y,z  \rbrace_4={e}^{-s} cos(z),\quad
\lbrace  x,s \rbrace_4={e}^{-s} cos(z),\quad
\lbrace  y,s \rbrace_4={e}^{-s}sin(z).
\\
P_5:\quad
\lbrace x,y  \rbrace_5={e}^{s} ,\quad\,
\lbrace x,z  \rbrace_5={e}^{s}  cos(z),\quad
\lbrace y,z  \rbrace_5={e}^{-s}sin(z),\quad
\lbrace  x,s \rbrace_5={e}^{-s} sin(z),\quad
\lbrace  y,s \rbrace_5=-{e}^{-s} cos(z).\quad
\\
P_6:\quad
\lbrace x,y  \rbrace_6={e}^{s} ,\qquad\qquad
\lbrace x,z   \rbrace_6={e}^{-s}( cos(z)-sin(z)),\qquad\quad\quad
\lbrace y,z  \rbrace_6={e}^{-s}( cos(z)+sin(z)),\qquad\qquad\qquad
\\
\lbrace  x,s \rbrace_6={e}^{-s}( cos(z)+sin(z)),\qquad
\lbrace  y,s \rbrace_6=-{e}^{-s}( cos(z)-sin(z)),
\end{align*}
 such that the above structures   are    compatible with each other:
\begin{align*}
[P_i, P_j]=0
,\qquad i, j=1,2,3,4,5,6.
\end{align*}
\end{eg}
  \begin{sidewaystable}
   {\bf Table 1}:
 {\small
 Jacobi structures on real three-dimensional
  Lie algebras and
  Lie groups
  for which after  Poissonization  we have a non-degenerate Poisson brackets on
  \qquad $M=G\otimes \mathbb{R}$.}

    \begin{tabular}{|  l| l| l|   l|  }
      \hline
          {\footnotesize Jacobi structures  on  Lie algebra ${\rom{2}}$}
          &
{\footnotesize
Representation of one
 equivalence class
}
&
{\footnotesize
Representation of one
 equivalence class
}
&
{\footnotesize  compatible Poisson  structures  on $M=\mathbf{\rom{2}} \otimes \mathbb{R}$}
\\

&
 {\footnotesize
 on  Lie algebra ${\rom{2}}$
} &
 {\footnotesize
 on Lie group $\mathbf{\rom{2}}$ }
&
 \\
 \hline
 $\Lambda=
\lambda_{{12}}\partial _{x}\wedge \partial_{ y}+\lambda_{{13}}\partial _{x}\wedge \partial_{ z}
+ \lambda_{{23}}\partial _{y}\wedge \partial_{ z}$
 &
$\Lambda_1=\partial _{y}\wedge \partial_{ z}$
&
$\mathbf{\Lambda_1}=-z\partial _{x}\wedge \partial_{ z}
+\partial _{y}\wedge \partial_{ z}$
&
$P_1=-ze^{-s}\partial _{x}\wedge \partial_{ z}
+e^{-s}\partial _{x}\wedge \partial_{ s}$
 \\
$E=-\lambda_{{23}}\partial _{x}$
&
$E_1=-\partial _{x} $
&
$\mathbf{E_1}=-\partial _{x} $
&
$\qquad +e^{-s}\partial _{y}\wedge \partial_{ z}$
\\
&&&\\
{\footnotesize  Comment:}
   $
\lambda_{23}\neq 0
$
&
 $\Lambda_2=\partial _{x}\wedge \partial_{ z}+\partial _{y}\wedge \partial_{ z}
$
&
$\mathbf{\Lambda_2}=(1-z)\partial _{x}\wedge \partial_{ z}
+\partial _{y}\wedge \partial_{ z}$
&
$P_2=(1-z)e^{-s}\partial _{x}\wedge \partial_{ z}
+e^{-s}\partial _{x}\wedge \partial_{ s}$
 \\

 &
 $E_2=- \partial _{x} $
&
$\mathbf{E_2}=- \partial _{x} $
&
$\qquad +e^{-s}\partial _{y}\wedge \partial_{ z}$
\\
&&&\\

&
$\Lambda_3=\partial _{x}\wedge \partial_{ y}
+\partial _{y}\wedge \partial_{ z}
$
&
$\mathbf{\Lambda_3}=\partial _{x}\wedge \partial_{ y}
-z\partial _{x}\wedge \partial_{ z}
+\partial _{y}\wedge \partial_{ z}

$
&
$P_3=e^{-s}\partial _{x}\wedge \partial_{ y}-ze^{-s}\partial _{x}\wedge \partial_{ z}$
 \\

 &
 $E_3= -\partial _{x} $
&
$\mathbf{E_3}= -\partial _{x} $
&
$\qquad + e^{-s}\partial _{x}\wedge \partial_{ s}
+e^{-s}\partial _{y}\wedge \partial_{ z}$
\\
&&&\\

&
$\Lambda_4=\partial _{x}\wedge \partial_{ y}+\partial _{x}\wedge \partial_{ z}
+\partial _{y}\wedge \partial_{ z}
$
&
$\mathbf{\Lambda_4}=\partial _{x}\wedge \partial_{ y}
+(1-z)\partial _{x}\wedge \partial_{ z}
$
&
$P_4=e^{-s}\partial _{x}\wedge \partial_{ y}+(1-z)e^{-s}\partial _{x}\wedge \partial_{ z}$
 \\

 &
 $E_4= -\partial _{x} $
 &
 $\qquad \,+\partial _{y}\wedge \partial_{ z}$
 &
 $\qquad +e^{-s}\partial _{x}\wedge \partial_{ s}
+e^{-s}\partial _{y}\wedge \partial_{ z}
$
 \\
 &
&
$\mathbf{E_4}= -\partial _{x} $
&
\\   \hline
      {\footnotesize Jacobi structures  on  Lie algebra ${\rom{3}}$}
      &
{\footnotesize
Representation of one
 equivalence class
}
&
{\footnotesize
Representation of one
 equivalence class
}
&
{\footnotesize  compatible Poisson  structures  on $M=\mathbf{\rom{3}} \otimes \mathbb{R}$}
\\
&
 {\footnotesize
 on  Lie algebra ${\rom{3}}$
} &
 {\footnotesize
 on Lie group $\mathbf{\rom{3}}$ }
&
 \\
  \hline
${\Lambda}=
\lambda_{{12}}\partial _{x}\wedge \partial_{ y}+\lambda_{{13}}\partial _{x}\wedge \partial_{ z}+
\lambda_{{23}}\partial _{y}\wedge \partial_{ z}$
&
$\Lambda_1=\partial _{x}\wedge \partial_{ z}
$
&
$\mathbf{\Lambda_1}=\partial _{x}\wedge \partial_{ z}
+(y+z)\partial _{y}\wedge \partial_{ z}$
&
$P_1=e^{-s}\partial _{x}\wedge \partial_{ z}
+(y+z)e^{-s}\partial _{y}\wedge \partial_{ z}-$
 \\
$E=(\lambda_{{13}}-
\lambda_{{12}})\partial _{y}-(\lambda_{{13}}-\lambda_{{12}})\partial _{z}
$
&
$E_1=\partial _{y}-\partial _{z} $
&
$\mathbf{E_1}=\partial _{y}-\partial _{z} $
&
$\qquad e^{-s}\partial _{y}\wedge \partial_{ s}
+e^{-s}\partial _{z}\wedge \partial_{ s}$
\\
&&&\\
 {\footnotesize Comment:}
$  \begin {array}{c}
\lambda_{12}\neq \pm \lambda_{13}
\end {array}$
&
 $\Lambda_2=\partial _{x}\wedge \partial_{ y}
$
&
$\mathbf{\Lambda_2}=\partial _{x}\wedge \partial_{ y}
-(y+z)\partial _{y}\wedge \partial_{ z}$
&
$P_2=e^{-s}\partial _{x}\wedge \partial_{ y}
-(y+z)e^{-s}\partial _{y}\wedge \partial_{ z}+$
 \\

 &
 $E_2=- \partial _{y}+\partial _{z} $
&
$\mathbf{E_2}=- \partial _{y}+\partial _{z} $
&
$\qquad e^{-s}\partial _{y}\wedge \partial_{ s}
-e^{-s}\partial _{z}\wedge \partial_{ s}$
\\
&&&\\

&
$\Lambda_3=\partial _{x}\wedge \partial_{ z}
+\partial _{y}\wedge \partial_{ z}
$
&
$\mathbf{\Lambda_3}=\partial _{x}\wedge \partial_{ z}
+(y+z+1)\partial _{y}\wedge \partial_{ z}
$
&
$P_3=e^{-s}\partial _{x}\wedge \partial_{ z}
+(y+z+1)e^{-s}\partial _{y}\wedge \partial_{ z}-$
 \\

 &
 $E_3= \partial _{y}-\partial _{z} $
&
$\mathbf{E_3}= \partial _{y}-\partial _{z} $
&
$\qquad  e^{-s}\partial _{y}\wedge \partial_{ s}
+e^{-s}\partial _{z}\wedge \partial_{ s}$
\\
&&&\\

&
$\Lambda_4=\partial _{x}\wedge \partial_{ y}
+\partial _{y}\wedge \partial_{ z}
$
&
$\mathbf{\Lambda_4}=\partial _{x}\wedge \partial_{ y}
-(y+z-1)\partial _{y}\wedge \partial_{ z}
$
&
$P_4=e^{-s}\partial _{x}\wedge \partial_{ y}
-(y+z-1)e^{-s}\partial _{y}\wedge \partial_{ z}+$
 \\

 &
 $E_4= -\partial _{y}+\partial _{z} $
&
$\mathbf{E_4}= -\partial _{y}+\partial _{z} $
&
$\qquad e^{-s}\partial _{y}\wedge \partial_{ s}
-e^{-s}\partial _{z}\wedge \partial_{ s}$
\\
\hline
 \end{tabular}
 \end{sidewaystable}

 \begin{sidewaystable}
  {\bf Table 1}:
 {\small
  continue.
  }

 \begin{tabular}{|  l| l| l|  l|  }
 \hline
 {\footnotesize Jacobi structures  on  Lie algebra ${\rom{4}}$}
&
{\footnotesize
Representation of one
 equivalence class
 }
&{\footnotesize
Representation of one
 equivalence class
}
&
{\footnotesize  compatible Poisson  structures  on $M=\mathbf{\rom{4}} \otimes \mathbb{R}$}
\\

&
{\footnotesize
on   Lie algebra ${\rom{4}}$ }
&
{\footnotesize
on Lie group $\mathbf{\rom{4}}$ }
&
\\
   \hline
   ${\Lambda}=
\lambda_{{12}}\partial _{x}\wedge \partial_{ y}+\lambda_{{13}}\partial _{x}\wedge \partial_{ z}+
\lambda_{{23}}\partial _{y}\wedge \partial_{ z}$
&
$\Lambda_1=\partial _{x}\wedge \partial_{ y}
$
&
$\mathbf{\Lambda_1}=\partial _{x}\wedge \partial_{ y}
+(y-z)\partial _{y}\wedge \partial_{ z}$
&
$P_1=e^{-s}\partial _{x}\wedge \partial_{ y}
+(y-z)e^{-s}\partial _{y}\wedge \partial_{ z}$
 \\
$E=-\lambda_{{12}}\partial _{y}-(\lambda_{{12}}+\lambda_{{13}})\partial _{z}$
&
$E_1=-\partial _{y}-\partial _{z} $
&
$\mathbf{E_1}=-\partial _{y}-\partial _{z} $
&
$\qquad e^{-s}\partial _{y}\wedge \partial_{ s}
+e^{-s}\partial _{z}\wedge \partial_{ s}$
\\
&&&\\
{\footnotesize Comment:}
$  \begin {array}{c}
\lambda_{12}\neq 0
\end {array}$
&
 $\Lambda_2=\partial _{x}\wedge \partial_{ y}
 +\partial _{y}\wedge \partial_{ z}
$
&
$\mathbf{\Lambda_2}=\partial _{x}\wedge \partial_{ y}
+(y-z+1)\partial _{y}\wedge \partial_{ z}$
&
$P_2=e^{-s}\partial _{x}\wedge \partial_{ y}
+e^{-s}\partial _{y}\wedge \partial_{ z}$
 \\

 &
 $E_2=- \partial _{y}-\partial _{z} $
&
$\mathbf{E_2}=- \partial _{y}-\partial _{z} $
&
$\qquad +e^{-s}\partial _{y}\wedge \partial_{ s}
+e^{-s}\partial _{z}\wedge \partial_{ s}$
\\
 &&&
 \\
&
$\Lambda_3=\partial _{x}\wedge \partial_{ y}
+\partial _{x}\wedge \partial_{ z}
$
&
$\mathbf{\Lambda_3}=\partial _{x}\wedge \partial_{ y}
+\partial _{x}\wedge \partial_{ z}
$
&
$P_3=e^{-s}\partial _{x}\wedge \partial_{ y}
+e^{-s}\partial _{x}\wedge \partial_{ z}$
 \\
 &
 &
 $\qquad+(2y-z)\partial _{y}\wedge \partial_{ z}$
 &
 $\qquad +(2y-z) e^{-s}\partial _{y}\wedge \partial_{ z}
+e^{-s}\partial _{y}\wedge \partial_{ s}$
 \\
 &
 $E_3= -\partial _{y}-2\partial _{z} $
&
$\mathbf{E_3}= \partial _{y}-2\partial _{z} $
&
$\qquad+2e^{-s}\partial _{z}\wedge \partial_{ s}$
\\
&&&\\
&
$\Lambda_4=\partial _{x}\wedge \partial_{ y}
+\partial _{x}\wedge \partial_{ z}
+\partial _{y}\wedge \partial_{ z}
$
&
$\mathbf{\Lambda_4}=\partial _{x}\wedge \partial_{ y}
+\partial _{x}\wedge \partial_{ z}
$
&
$P_4=e^{-s}\partial _{x}\wedge \partial_{ y}
+e^{-s}\partial _{x}\wedge \partial_{ z}$
 \\
 &
 &
 $\qquad+(2y-z+1)\partial _{y}\wedge \partial_{ z}$
 &
  $\qquad+e^{-s}(2y-z+1)\partial _{y}\wedge \partial_{ z}$
 \\
 &
 $E_4= -\partial _{y}-2\partial _{z} $
&
$\mathbf{E_4}= -\partial _{y}+\partial _{z} $
&
$\qquad e^{-s}\partial _{y}\wedge \partial_{ s}
+2e^{-s}\partial _{z}\wedge \partial_{ s}$
\\
\hline
{\footnotesize Jacobi structures  on  Lie algebra ${\rom{6}_0}$
}
&
{\footnotesize  Representation of one
 equivalence class  }
 &{\footnotesize  Representation of one
 equivalence class    }
&
{\footnotesize  compatible Poisson  structures  on $M=\mathbf{\rom{6}_0} \otimes \mathbb{R}$}
\\
&
on  Lie algebra ${\rom{6}_0}$
&
on   Lie group $\mathbf{\rom{6}_0}$
&
\\
   \hline
   ${\Lambda}=
\lambda_{{12}}\partial _{x}\wedge \partial_{ y}+\lambda_{{13}}\partial _{x}\wedge \partial_{ z}+
\lambda_{{23}}\partial _{y}\wedge \partial_{ z}$
&
$\Lambda_1=\partial _{x}\wedge \partial_{ z}
$
&
$\mathbf{\Lambda_1}=cosh(z)\partial _{x}\wedge \partial_{ z}
$
&
$P_1=e^{-s}cosh(z)\partial _{x}\wedge \partial_{ z}
-e^{-s}sinh(z)\partial _{y}\wedge \partial_{ z}$
 \\
 $E=-\lambda_{{23}}\partial _{x}-\lambda_{13}\partial _{y}
$
 &
&
$\qquad-sinh(z)\partial _{y}\wedge \partial_{ z}$
&
$\qquad -e^{-s}sinh(z)\partial _{x}\wedge \partial_{ s}
+e^{-s}cosh(z)\partial _{y}\wedge \partial_{ s}$
\\
&
$E_1=-\partial _{y} $
&
$\mathbf{E_1}=sinh(z)\partial _{x}-cosh(z)\partial _{y} $
&

\\
{\footnotesize Comment:}
$  \begin {array}{c}
\lambda_{13}\neq \pm
\lambda_{23}
\end {array}$
&&&\\

 &
$\Lambda_2=\partial _{y}\wedge \partial_{ z}
$
&
$\mathbf{\Lambda_2}=-sinh(z)\partial _{x}\wedge \partial_{ z}
$
&
$P_2=-e^{-s}sinh(z)\partial _{x}\wedge \partial_{ z}
+e^{-s}cosh(z)\partial _{y}\wedge \partial_{ z}$
 \\
 &
&
$\qquad+cosh(z)\partial _{y}\wedge \partial_{ z}$
&
$\qquad +e^{-s}cosh(z)\partial _{x}\wedge \partial_{ s}
-e^{-s}sinh(z)\partial _{y}\wedge \partial_{ s}$
\\
&
$E_2=-\partial _{x} $
&
$\mathbf{E_2}=-cosh(z)\partial _{x}+sinh(z)\partial _{y} $
&

  \\
&
$\Lambda_3=\partial _{x}\wedge \partial_{ y}
+\partial _{x}\wedge \partial_{ z}
$
&
$\mathbf{\Lambda_3}=\partial _{x}\wedge \partial_{ y}+cosh(z)\partial _{x}\wedge \partial_{ z}
$
&
$P_3=e^{-s}\partial _{x}\wedge \partial_{ y}
+e^{-s}cosh(z)\partial _{x}\wedge \partial_{ z}
$
 \\
 &
&
$\qquad-sinh(z)\partial _{y}\wedge \partial_{ z}$
&
$\qquad-e^{-s}sinh(z)\partial _{y}\wedge \partial_{ z}$
\\
&&&
$\qquad -e^{-s}sinh(z)\partial _{x}\wedge \partial_{ s}
+e^{-s}cosh(z)\partial _{y}\wedge \partial_{ s}$
\\
&
$E_3=-\partial _{y} $
&
$\mathbf{E_3}=sinh(z)\partial _{x}-cosh(z)\partial _{y} $
&

\\
&&&\\
&
$\Lambda_4=\partial _{x}\wedge \partial_{ y}
+\partial _{y}\wedge \partial_{ z}
$
&
$\mathbf{\Lambda_4}=
\partial _{x}\wedge \partial_{ y}
-sinh(z)\partial _{x}\wedge \partial_{ z}
$
&
$P_4=e^{-s}\partial _{x}\wedge \partial_{ y}
-e^{-s}sinh(z)\partial _{x}\wedge \partial_{ z}
$
 \\
&
&
$\qquad+cosh(z)\partial _{y}\wedge \partial_{ z}$
&
$\qquad+e^{-s}cosh(z)\partial _{y}\wedge \partial_{ z}$
\\
&
$E_4=-\partial _{x} $
&
$\mathbf{E_4}=-cosh(z)\partial _{x}+sinh(z)\partial _{y} $
&
$\qquad +e^{-s}cosh(z)\partial _{x}\wedge \partial_{ s}
-e^{-s}sinh(z)\partial _{y}\wedge \partial_{ s}$
\\
\hline
     \end{tabular}
    \end{sidewaystable}

  \begin{sidewaystable}
   {\bf Table 1}:
 {\small
  continue.
  }

 \begin{tabular}{|  l| l| l|  l| }
  \hline
   {\footnotesize Jacobi structures  on  Lie algebra ${\rom{7}_0}$}
  &
  {\footnotesize  Representation of one
 equivalence class} &
  {\footnotesize Representation of one
 equivalence class }&
  {\footnotesize  compatible Poisson  structures  on $M=\mathbf{\rom{7}_0} \otimes \mathbb{R}$}
 \\
 &
{\footnotesize
on  Lie algebra ${\rom{7}_0}$}
&
{\footnotesize
 on Lie group $\mathbf{\rom{7}_0}$}
&
\\
 \hline

 ${\Lambda}=
\lambda_{{12}}\partial _{x}\wedge \partial_{ y}+\lambda_{{13}}\partial _{x}\wedge \partial_{ z}+
\lambda_{{23}}\partial _{y}\wedge \partial_{ z}$
 &
$\Lambda_1=\partial _{y}\wedge \partial_{ z}
$
&
$\mathbf{\Lambda_1}=
-sin(z)\partial _{x}\wedge \partial_{ z}
$
&
$P_1=-e^{-s}sin(z)\partial _{x}\wedge \partial_{ z}
+e^{-s}cos(z)\partial _{y}\wedge \partial_{ z}
$
 \\

 $E=-\lambda_{{23}}\partial _{x}+\lambda_{13}\partial _{y}
$
 &
$E_1=-\partial _{x} $
&
$\qquad +cos(z)\partial _{y}\wedge \partial_{ z}$
&
$\qquad +e^{-s}cos(z)\partial _{x}\wedge \partial_{ s}
+e^{-s}sin(z)\partial _{y}\wedge \partial_{ s}$
\\

&
&
$\mathbf{E_1}=-cos(z)\partial _{x}-sin(z)\partial _{y} $
&

\\
{\footnotesize Comment:}
$  \begin {array}{c}
\lambda_{13}^2+\lambda_{23}^2\neq 0
\end {array}$
 &
&

&
\\
&
$\Lambda_2=\partial _{x}\wedge \partial_{ z}
$
&
$\mathbf{\Lambda_2}=cos(z)\partial _{x}\wedge \partial_{ z}
$
&
$P_2=
+e^{-s}cos(z)\partial _{x}\wedge \partial_{ z}
+e^{-s}sin(z)\partial _{y}\wedge \partial_{ z}
$
 \\
 &
$E_2= \partial _{y} $
&
$\qquad +sin(z)\partial _{y}\wedge \partial_{ z}$
&
$\qquad +e^{-s}sin(z)\partial _{x}\wedge \partial_{ s}
-e^{-s}cos(z)\partial _{y}\wedge \partial_{ s}$
\\
&
&
$\mathbf{E_2}=-sin(z)\partial _{x}+cos(z)\partial _{y} $
&

\\
&&&\\
&
$\Lambda_3=\partial _{x}\wedge \partial_{ z}
+\partial _{y}\wedge \partial_{ z}
$
&
$\mathbf{\Lambda_3}=(cos(z)-sin(z))\partial _{x}\wedge \partial_{ z}
$
&
$P_3=
+e^{-s}(cos(z)-sin(z))\partial _{x}\wedge \partial_{ z}
$

 \\
 &
$E_3= -\partial _{x} +\partial _{y}$
&
$\qquad
+(sin(z)+cos(z))\partial _{y}\wedge \partial_{ z}
$&
$\qquad
+e^{-s}(sin(z)+cos(z))\partial _{y}\wedge \partial_{ z}
$
\\
&
&
$\mathbf{E_3}=(-sin(z)-cos(z))\partial _{x}$
&
$ \qquad+e^{-s}(sin(z)+cos(z))\partial _{x}\wedge \partial_{ s}$
\\
&
&
$\qquad+(cos(z)-sin(z))\partial _{y}$
&
$\qquad-e^{-s}(cos(z)-sin(z))\partial _{y}\wedge \partial_{ s}$
\\
&&&\\
&
$\Lambda_4=\partial _{x}\wedge \partial_{ y}
+\partial _{y}\wedge \partial_{ z}
$&
$\mathbf{\Lambda_4}=\partial _{x}\wedge \partial_{ y}
-sin(z)\partial _{x}\wedge \partial_{ z}
$
&
$P_4=
+e^{-s}\partial _{x}\wedge \partial_{ y}
-e^{-s}sin(z)\partial _{x}\wedge \partial_{ z}
$
 \\
 &
$E_4= -\partial _{x} $
&
$\qquad +cos(z)\partial _{y}\wedge \partial_{ z}$
&
$\qquad
+e^{-s}cos(z)\partial _{y}\wedge \partial_{ z}
$
\\
&
&
$\mathbf{E_4}=-cos(z)\partial _{x}-sin(z)\partial _{y} $
&$\qquad
+e^{-s}cos(z)\partial _{x}\wedge \partial_{ s}
+e^{-s}sin(z)\partial _{y}\wedge \partial_{ s}$
\\
&&&\\
&
$\Lambda_5=\partial _{x}\wedge \partial_{ y}
+\partial _{x}\wedge \partial_{ z}
$&
$\mathbf{\Lambda_5}=\partial _{x}\wedge \partial_{ y}
+cos(z)\partial _{x}\wedge \partial_{ z}
$
&
$P_5=+e^{-s}\partial _{x}\wedge \partial_{ y}
+e^{-s}cos(z)\partial _{x}\wedge \partial_{ z}
$
 \\
 &
$E_5= \partial _{y} $
&
$\qquad +sin(z)\partial _{y}\wedge \partial_{ z}$
&
$\qquad
+e^{-s}sin(z)\partial _{y}\wedge \partial_{ z}
+e^{-s}sin(z)\partial _{x}\wedge \partial_{ s}
$
\\
&
&
$\mathbf{E_5}=-sin(z)\partial _{x}+cos(z)\partial _{y} $

&
$\qquad-e^{-s}cos(z)\partial _{y}\wedge \partial_{ s}$
\\
&&&\\
&
$\Lambda_6=\partial _{x}\wedge \partial_{ y}
+\partial _{x}\wedge \partial_{ z}
$
&
$\mathbf{\Lambda_6}=\partial _{x}\wedge \partial_{ y}
$
&

 \\
 &
 $\qquad+\partial _{y}\wedge \partial_{ z}$
 &$\qquad+(cos(z)-sin(z))\partial _{x}\wedge \partial_{ z}$
&
$P_6=+e^{-s}\partial _{x}\wedge \partial_{ y}
$
\\&

&
$\qquad
+(sin(z)+cos(z))\partial _{y}\wedge \partial_{ z}
$
&
$
\qquad+e^{-s}(cos(z)-sin(z))\partial _{x}\wedge \partial_{ z}
$

\\
&
$E_6= -\partial _{x}+\partial _{y} $
&
$\mathbf{E_6}=(-sin(z)-cos(z))\partial _{x}$
&
$\qquad
+e^{-s}(sin(z)+cos(z))\partial _{y}\wedge \partial_{ z}
$
\\
&
&
$\qquad+(cos(z)-sin(z))\partial _{y}$
&$\qquad
+e^{-s}(sin(z)+cos(z))\partial _{y}\wedge \partial_{ z}
$
\\
&&&
$\qquad-e^{-s}(cos(z)-sin(z))\partial _{y}\wedge \partial_{ s}$
\\
\hline
      \end{tabular}
\end{sidewaystable}
  \begin{table}[t]
   {\bf Table 2}:
 {\small
  Realizations of some three-dimensional Lie algebras on
  $ \mathbb{R}^2$
  \textcolor{red}{ [\ref{realiz}]}}.

    \begin{tabular}{|l| l|  }
   \hline\hline
   Lie  algebra  with non-zero  commutation relations &
    Realization on
  $ \mathbb{R}^2$
  with coordinates $(q_1, q_2)$
  \\
   &\\
  \hline
&\\
$\rom{2}$  & $X_1=\partial _1,\,X_2= \partial _2,\, X_3=q_2\partial _1$\\
 $[X_2, X_3]=X_1$&\\
 &\\
 \hline
&\\
$ \rom{3} $  &$X_1= (q_1+q_2)\partial _1+(q_1+q_2)\partial _2,\, X_2=\partial _1,\,X_3=\partial _2  $\\
$[X_1, X_2]=-(X_2+X_3), [X_1, X_3]=-(X_2+X_3)$&\\
 &\\

 \hline
&\\
$ \rom{4}$  &$X_1=-q_1(q_2-1)\partial _1-q_2^2\partial _2,\,X_2=\partial _1,\,X_3= q_2\partial _1$\\
$[X_1, X_2]=-(X_2-X_3), [X_1, X_3]=-X_3$&\\
 &\\

  \hline
&\\
$ \rom{6}_0 $  &$X_1=\partial _1,\, X_2=\partial _2,\,X_3=q_2\partial _1+q_1\partial _2   $\\
$[X_1, X_3]=X_2, [X_2, X_3]=X_1$&\\
 &\\

\hline
&\\
 $\rom{7}_0$  &
 $X_1=\partial _1,\,X_2=\partial _2,\,X_3=q_2\partial _1-q_1\partial _ 2$\\
$[X_1, X_3]=-X_2, [X_2, X_3]=X_1$&\\
 &\\
\hline
        \end{tabular}
 \end{table}
 \newpage
  \section*{Appendix:
 Other  non-degenerate
compatible Poisson structures on
  the real four-dimensional   Lie group
$\mathbf{\rom{2}} \otimes \mathbb{R}$  }
In this appendix,
we describe
 the details
for obtaining  Poisson structures
 on   real four-dimensional  Lie algebra
  ${\rom{2}} \oplus \mathbb{R}$.
We also
 obtain  the
  non-degenerate compatible
 Poisson
 structures on the related Lie group.
Note that in the classification of these Poisson structures, some of the structures are equivalent, and
 therefore define an equivalence relation and apply the following theorem:
\begin{thm}\label{poisson hassan}
Two Poisson structures
$P$
 and
$P^\prime$
 are equivalent if  there exists
 $A \in Aut(\mathfrak{g}), $
$ ( $ i.e.,
 automorphism group of the Lie algebra
$\mathfrak{g})$
 such that
 \begin{equation}\label{kaka}
P^\prime=\mathcal{A}^t P \mathcal{A},
\end{equation}
\end{thm}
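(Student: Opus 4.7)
The plan is to verify directly that if $P$ satisfies the Poisson condition (equation (\ref{Lie al2}) specialized to $E = 0$) and $\mathcal{A}$ is an automorphism of the Lie algebra $\mathfrak{g}$, then $P' = \mathcal{A}^t P \mathcal{A}$ also satisfies the same condition. Since this statement is precisely the specialization of the earlier Jacobi-structure equivalence theorem to the case $E = E' = 0$, the argument should run along the same lines as the one referenced in \textcolor{red}{[\ref{hassan2}]} for the Jacobi case.

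First I would record the automorphism property of $\mathcal{A}$ in component form: the condition $\mathcal{A}[X_a,X_b] = [\mathcal{A}X_a,\mathcal{A}X_b]$ translates to $\mathcal{A}_a^{\,c}\mathcal{A}_b^{\,d}\mathbf{f}_{cd}^{\,\,e} = \mathbf{f}_{ab}^{\,\,c}\mathcal{A}_c^{\,e}$. Via the identifications $(\chi_a)_b^{\,c} = -\mathbf{f}_{ab}^{\,\,c}$ and $(\mathcal{Y}^c)_{ab} = -\mathbf{f}_{ab}^{\,\,c}$ of (\ref{rep}), this translates into two matrix identities relating $\chi_a$ and $\mathcal{Y}^e$ to their $\mathcal{A}$-conjugates, which is the essential algebraic input for the calculation.

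Next I would substitute $P' = \mathcal{A}^t P \mathcal{A}$ into each of the three terms of (\ref{Lie al2}) with $E^e = 0$ and use these identities repeatedly to pull the factors of $\mathcal{A}$ and $\mathcal{A}^t$ outside each term. The goal is to rewrite the full left-hand side as $\mathcal{A}^t\bigl(\text{l.h.s.\ of (\ref{Lie al2}) for } P\bigr)\mathcal{A}$, which vanishes by the hypothesis that $P$ is a Poisson structure. I would also remark that what is being proved in fact defines an equivalence relation on the set of Poisson structures on $\mathfrak{g}$: reflexivity is trivial (take $\mathcal{A}=\mathrm{id}$), symmetry uses that $\mathcal{A}^{-1}$ is also an automorphism, and transitivity uses that the composition of automorphisms is an automorphism.

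The main obstacle I expect is not conceptual but rather bookkeeping: equation (\ref{Lie al2}) mixes matrix notation (e.g.\ $\Lambda\chi_b$ and $\Lambda\mathcal{Y}^e\Lambda$) with explicit index contractions (e.g.\ $\Lambda^{ce}$ and $\Lambda^{be}$), and the automorphism identities take slightly different forms for the $\chi$-representation and the $\mathcal{Y}$-representation. Care is therefore needed to make sure the index contractions and matrix products align correctly throughout the manipulation, but beyond that the proof is a direct specialization of the Jacobi case already treated in \textcolor{red}{[\ref{hassan2}]}.
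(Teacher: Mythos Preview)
Your proposal is correct and essentially matches what the paper does: the paper's ``proof'' consists solely of a citation to \textcolor{red}{[\ref{hassan2}]}, and you correctly identify that the statement is the $E=0$ specialization of the Jacobi-equivalence theorem (Theorem~4.4 here), whose proof in that reference proceeds exactly along the lines you sketch. Your outline is thus more detailed than what appears in the present paper, but the underlying argument is the same.
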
\label{}
\begin{proof}
The proof is given in \textcolor{red}{ [\ref{hassan2}]}.
\end{proof}

\subsection*{ Lie algebra ${\rom{2}} \oplus \mathbb{R}\cong  A_{3,1}\oplus A_1$}
We
first assume  the matrix form of the
Poisson structure
on   real four-dimensional  Lie algebra
 ${\rom{2}} \oplus \mathbb{R}$ as follows:
\begin{equation}\label{BARGH}
P=
\left( \begin {array}{cccc} 0&p_{{12}}&p_{{13}}&p_{{14}}
\\ \noalign{\medskip}-p_{12}&0&p_{23}&p_{24}
\\ \noalign{\medskip}-p_{{13}}&-p_{23}&0&p_{34}
\\ \noalign{\medskip}-p_{{14}}&-p_{24}&-p_{34}&0
\end {array}
 \right),
\end{equation}
where
$p _{ij} $
are arbitrary  real constants.
Using
$(\ref{rep}),$
we obtain
 adjoint representations
$\chi_i$
of the Lie algebra  ${\rom{2}} \oplus \mathbb{R}$:
\begin{equation}
\label{miroz}
\chi_1=\chi_4= 0,~
\chi_2=\left( \begin {array}{cccc} 0&0&0&0\\
\noalign{\medskip}0&0&0&0
\\ \noalign{\medskip}-1&0&0&0
\\ \noalign{\medskip}0&0&0&0
\end {array} \right),~
\chi_3=\left( \begin {array}{cccc} 0&0&0&0\\
\noalign{\medskip}1&0&0&0
\\ \noalign{\medskip}0&0&0&0
\\ \noalign{\medskip}0&0&0&0
\end {array} \right),~
\end{equation}
and antisymmetric matrices
${\cal Y}_i$
 :
\begin{equation}\label{antisym}
{\cal Y}_1= \left(\begin {array}{cccc}
0&0&0&0
\\ \noalign{\medskip}0&0&-1&0
\\ \noalign{\medskip}0&1&0&0
\\ \noalign{\medskip}0&0&0&0
\end {array} \right),~
{\cal Y}_2=
{\cal Y}_3=
{\cal Y}_4=0.
\end{equation}
Substituting  $E=0$ in (13), we have:
\begin{equation}\label{POISSON HASSAN}
 {P^{ce}}({\chi ^t}_cP ) + P {\cal Y}^eP  + (P \chi _b){P ^{be}}    = 0.
\end{equation}
Now inserting
 ($\ref{BARGH}$)-$( \ref{antisym})$   in ($\ref{POISSON HASSAN}$), one can obtain the  Poisson structures
for  the Lie algebra
${\rom{2}} \oplus \mathbb{R}$
as follows:
\begin{equation}\label{olampic}
P=
\left( \begin {array}{cccc}
0&p_{{12}}&p_{{13}}&p_{{14}}
\\ \noalign{\medskip}-p_{12}&0&0&p_{24}
\\ \noalign{\medskip}-p_{{13}}&0&0&p_{34}
\\ \noalign{\medskip}-p_{14}&-p_{24}&-p_{34}&0
\end {array}
 \right).
\end{equation}
Then
applying the automorphism group of the Lie algebra ${\rom{2}} \oplus \mathbb{R}$
 \textcolor{red}{ [\ref{SEFIDREZAEI}]}
 \begin{equation}\label{IIautomorphism}
\mathcal
{A}
=
 \left( \begin {array}{cccc} a_{{22}}a_{{33}}-a_{{23}}a_{{32}}&0&0&0
\\ \noalign{\medskip}a_{{21}}&a_{{22}}&a_{{23}}&a_{{24}}
\\ \noalign{\medskip}a_{{31}}&a_{{32}}&a_{{33}}&a_{{34}}
\\ \noalign{\medskip}a_{{41}}&0&0&a_{{44}}\end {array} \right),
\end{equation}
where
$a _{ij}\in \mathbb{R}.$
%
 Using
 $( \ref{kaka})$
with two equivalent Poisson structure
$P$ and $P^\prime$ from
$( \ref{olampic})$,
we get
 det$\mathcal{A} =
\dfrac{p^\prime_{{12}}p^\prime_{{34}}-p^\prime_{{13}}p^\prime_{{24}}}
{p_{{12}}p_{{34}}-p_{{13}}p_{{24}}}$.
Since we
must have
det$\mathcal{A}
\neq 0,
$
 it follows that
$p^\prime_{{12}}p^\prime_{{34}}-p^\prime_{{13}}p^\prime_{{24}}\neq 0.$
  Moreover,
  det$\mathcal{A}$
  does not depend on parameters
  $ p^\prime_{14}$;
  thus
 these parameters can take any
value.

By using  Theorem $\ref{poisson hassan}$, we show that the Poisson structure $P$ consists of the following
structures in one
 equivalence class:\\

(1)
 If
 $p^\prime_{12}=1,
  p^\prime_{34}=1,
  \, p^\prime_{13}=0,\,
 p^\prime_{24}=1,\,
 p^\prime_{14}=1
 $,
 then
   \begin{equation*}\label{}
 \mathcal
{A}
=   \left(\begin {array}{cccc}
    -{\frac {p_{{24}}}{a_{{32}} \left( p_{{12
}}p_{{34}}-p_{{13}}p_{{24}} \right) }}&0&0&0\\
 \noalign{\medskip}a_{{
21}}&{\frac {-{a_{{32}}}^{3}p_{{12}}{p_{{34}}}^{2}+{a_{{32}}}^{3}p_{{
13}}p_{{24}}p_{{34}}+{p_{{24}}}^{2}}{p_{{24}}{a_{{32}}}^{2} \left( p_{
{12}}p_{{34}}-p_{{13}}p_{{24}} \right) }}&{\frac {p_{{24}}}{{a_{{32}}}
^{2} \left( p_{{12}}p_{{34}}-p_{{13}}p_{{24}} \right) }}&a_{{24}}
\\
\noalign{\medskip}a_{{31}}&a_{{32}}&0&a_{34}\\
 \noalign{\medskip}-{\frac {p_{
{12}}}{a_{{32}} \left( p_{{12}}p_{{34}}-p_{{13}}p_{{24}} \right) }}&0&0
&{\frac {{a_{{32}}}^{2} \left( p_{{12}}p_{{34}}-p_{{13}}p_{{24}}
 \right) }{{p_{{24}}}^{2}}}\end {array} \right),
 \end{equation*}
   where
   \begin{equation*}
   a_{34}=-{\frac { \left(  \left( p_{{
12}}p_{{34}}-p_{{13}}p_{{24}} \right)  \left( a_{{21}}p_{{24}}+a_{{31}
}p_{{34}} \right) {a_{{32}}}^{2}-a_{{32}}p_{{14}}p_{{24}}-{p_{{24}}}^{
2} \right) a_{{32}}}{{p_{{24}}}^{2}}}
\end{equation*}
with
det$\mathcal{A} =
\dfrac{1}
{p_{{12}}p_{{34}}-p_{{13}}p_{{24}}}$
 and
the Poisson structure $P$ is  equivalent to
\begin{equation}\label{coordinateii}
P_1^\prime=\partial _x\wedge \partial _y+\partial _z\wedge \partial _s+\partial _y\wedge \partial _s+\partial _x\wedge \partial _s,
\end{equation}

(2)
 If
 $p^\prime_{12}=1,
  p^\prime_{34}=1,
  \, p^\prime_{13}=1,\,
 p^\prime_{24}=0,\,
 p^\prime_{14}=0
 $,
 then
 the Poisson structure $P$ is  equivalent to
\begin{equation*}\label{}
P_2^\prime=\partial _x\wedge \partial _y+\partial _z\wedge \partial _s+\partial _x\wedge \partial _z,
\end{equation*}

(3)
 If
 $p^\prime_{12}=1,
  p^\prime_{34}=1,
  \, p^\prime_{13}=0,\,
 p^\prime_{24}=1,\,
 p^\prime_{14}=0
 $,
 then
the Poisson structure $P$ is  equivalent to
\begin{equation*}\label{}
P_3^\prime=\partial _x\wedge \partial _y+\partial _z\wedge \partial _s+\partial _y\wedge \partial _s,
\end{equation*}

(4)
 If
 $p^\prime_{12}=1,
  p^\prime_{34}=1,
  \, p^\prime_{13}=0,\,
 p^\prime_{24}=0,\,
 p^\prime_{14}=1
 $,
 then
 the Poisson structure $P$  is  equivalent to
\begin{equation*}\label{}
P_4^\prime=\partial _x\wedge \partial _y+\partial _z\wedge \partial _s+\partial _x\wedge \partial _s,
\end{equation*}

(5)
 If
 $p^\prime_{12}=1,
  p^\prime_{34}=1,
  \, p^\prime_{13}=1,\,
 p^\prime_{24}=0,\,
 p^\prime_{14}=1
 $,
 then
 the Poisson structure $P$  is  equivalent to
\begin{equation*}\label{}
P_5^\prime=\partial _x\wedge \partial _y+\partial _z\wedge \partial _s+\partial _x\wedge \partial _z+
\partial _x\wedge \partial _s,
\end{equation*}

(6)
If
 $p^\prime_{12}=1,
  p^\prime_{34}=1,
  \, p^\prime_{13}=0,\,
 p^\prime_{24}=0,\,
 p^\prime_{14}=0
 $,
 then
 the Poisson structure $P$ is  equivalent to
\begin{equation*}\label{ashora}
P_6^\prime=\partial _x\wedge \partial _y+\partial _z\wedge \partial _s,
\end{equation*}

  (7)
 If
 $p^\prime_{12}=0,
  p^\prime_{34}=0,
  \, p^\prime_{13}=1,\,
 p^\prime_{24}=1,\,
 p^\prime_{14}=0
 $,
 then
 the Poisson structure $P$ is  equivalent to
\begin{equation*}\label{ashora}
P_7^\prime=\partial _x\wedge \partial _z+\partial _y\wedge \partial _s,
\end{equation*}

(8)
 If
 $p^\prime_{12}=1,
  p^\prime_{34}=0,
  \, p^\prime_{13}=1,\,
 p^\prime_{24}=1,\,
 p^\prime_{14}=0
 $,
 then
 the Poisson structure $P$ is  equivalent to
\begin{equation*}\label{}
P_8^\prime=\partial _x\wedge \partial _y+\partial _x\wedge \partial _z+\partial _y\wedge \partial _s,
\end{equation*}

(9)
 If
 $p^\prime_{12}=0,
  p^\prime_{34}=1,
  \, p^\prime_{13}=1,\,
 p^\prime_{24}=1,\,
 p^\prime_{14}=0
 $,
 then
the Poisson structure $P$ is  equivalent to
\begin{equation*}\label{}
P_9^\prime=\partial _z\wedge \partial _s+\partial _x\wedge \partial _z+\partial _y\wedge \partial _s,
\end{equation*}

  (10)
 If
 $p^\prime_{12}=0,
  p^\prime_{34}=0,
  \, p^\prime_{13}=1,\,
 p^\prime_{24}=1,\,
 p^\prime_{14}=1
 $,
 then
 the Poisson structure $P$ is  equivalent to
\begin{equation*}\label{ashora}
P_{10}^\prime=\partial _x\wedge \partial _z+\partial _y\wedge \partial _s+\partial _x\wedge \partial _s,
\end{equation*}

(11)
 If
 $p^\prime_{12}=1,
  p^\prime_{34}=0,
  \, p^\prime_{13}=1,\,
 p^\prime_{24}=1,\,
 p^\prime_{14}=1
 $,
 then
 the Poisson structure $P$ is  equivalent to
\begin{equation*}\label{}
P_{11}^\prime=\partial _x\wedge \partial _y+\partial _x\wedge \partial _z+\partial _y\wedge \partial _s+
\partial _x\wedge \partial _s,
\end{equation*}

(12)
 If
 $p^\prime_{12}=0,
  p^\prime_{34}=1,
  \, p^\prime_{13}=1,\,
 p^\prime_{24}=1,\,
 p^\prime_{14}=1
 $,
 then
the Poisson structure $P$ is  equivalent to
\begin{equation*}\label{}
P_{12}^\prime=\partial _z\wedge \partial _s+\partial _x\wedge \partial _z+\partial _y\wedge \partial _s+\partial _x\wedge \partial _s,
\end{equation*}

Now  these  Poisson structures     can  be converted to   the    Poisson   structures
 on the related Lie group.
To compute
these  Poisson  structures, we need to determine  the vielbein $e_{a}^{\;\;\mu}$ for Lie groups, and
in order to find  the vielbein $e_{a}^{\;\;\mu}$ for the Lie group
$\mathbf{\rom{2}} \otimes \mathbb{R}$,
 it is required to
calculate
 the  left-invariant one-forms on   the Lie group $\mathbf{\rom{2}} \otimes \mathbb{R}$ as follows
  \textcolor{red}{ [\ref{Mojaveri}]}:

 \begin{equation*}
 g^{-1}dg= e_{\;\;\mu}^{a}X_{a}dx^{\mu}=dx X_1+dy (X_2+z X_1)+dz X_3+ds X_4.
 \end{equation*}
 In view of the above relation, we get
\begin{equation*}\label{}
 e_{\;\;\mu}^{a}=\left( \begin {array}{cccc}
  1&z&0&0\\
 \noalign{\medskip}
 0&1&0&0
\\ \noalign{\medskip}0&0&1&0
\\ \noalign{\medskip}0&0&0&1
\end {array} \right).
\end{equation*}
Then one can obtain  the inverse of the vielbein $e_{\;\;\mu}^{a}$, that is $e_{a}^{\;\;\mu}$, for the Lie group $\mathbf{\rom{2}} \otimes \mathbb{R}$
\begin{equation}\label{e_a}
 e_{a}^{\;\;\mu}=\left( \begin {array}{cccc}
  1&-z&0&0\\
 \noalign{\medskip}
 0&1&0&0
\\ \noalign{\medskip}0&0&1&0
\\ \noalign{\medskip}0&0&0&1
\end {array} \right).
\end{equation}
Hence, substituting $(\ref{e_a})$ and $(\ref{coordinateii})$  in $(\ref{basis})$, one can calculate the   Poisson structure
$\mathbf{P_1^\prime}$   on the Lie
group
$\mathbf{\rom{2}}\otimes \mathbb{R}$:
\begin{equation*}
\mathbf{P_1^\prime}= \left( \begin {array}{cccc} 0&1&0&1-z\\ \noalign{\medskip}-1&0&0&1
\\ \noalign{\medskip}0&0&0&1\\ \noalign{\medskip}-1+z&-1&-1&0
\end {array} \right)
\end{equation*}
In the same way, one
can  obtain the
other
structures
$\mathbf{P_2^\prime},\cdots, \mathbf{P_{12}^\prime}$
on  the Lie group ${\mathbf{\rom{2}}\otimes \mathbb{R}}$
(see Table 3).
Note that  we  consider non-degenerate and compatible Poisson structures  on four-dimensional  Lie groups
$\mathbf{\rom{2}} \otimes \mathbb{R}$
(see Table 3).
 In the same way,
 we have obtained the
  Poisson structures  on   Lie groups
$\mathbf{\rom{3}} \otimes \mathbb{R},~\mathbf{\rom{4}} \otimes \mathbb{R}$ and $\mathbf{\rom{6}_0} \otimes \mathbb{R}$
but unfortunately all of them
 are degenerate.
Also,
for the
 Lie group
 $\mathbf{\rom{7}_0 }\otimes \mathbb{R}$
we have not found any  solution.
  \begin{table}[h]
 {\bf Table 3}:
 {\small
Other  non-degenerate
compatible
Poisson structures  on the four-dimensional  Lie algebra
${\rom{2}} \oplus \mathbb{R}$
 and its  Lie group
 $\mathbf{\rom{2}} \otimes \mathbb{R}$.
 }

    \begin{tabular}{|l| l|  l| }
    \hline
{\footnotesize Poisson structures  on  Lie algebra  ${\rom{2}}\oplus \mathbb{R}$}
 &
{\footnotesize
Representation of one
 equivalence class
 } &
 {\footnotesize
Representation of one
 equivalence class}
\\
&
{\footnotesize
  on  Lie algebra ${\rom{2}\oplus \mathbb{R}}$}
&
{\footnotesize
  on  Lie groups ${\mathbf{\rom{2}}\otimes \mathbb{R}}$}
\\
 \hline
 $P =p_{{12}}\partial _{x}\wedge \partial_{ y}+
 p_{{13}}\partial _{x}\wedge \partial_{ z}+
 $
&
${\footnotesize P_1^\prime=\partial _x\wedge \partial _y+\partial _z\wedge \partial _s+\partial _y\wedge \partial _s+\partial _x\wedge \partial _s}
$
&
$\mathbf{ P_1^\prime}= \partial _{x}\wedge \partial_{ y}
+(1-z)\partial _{x}\wedge \partial_{ s}+
$
\\
$ \qquad
p_{{14}}\partial _{x}\wedge \partial_{ s}
 +p_{{24}}\partial _{y}\wedge \partial_{ s}+
 $
&
&
$\qquad\quad \partial _{y}\wedge \partial_{ s}
+\partial _{z}\wedge \partial_{ s}
$
 \\
 $\qquad p_{{34}}\partial _{z}\wedge \partial_{ s}$&&\\
 &
 $
P_2^\prime=\partial _x\wedge \partial _y+\partial _z\wedge \partial _s+\partial _x\wedge \partial _z
$
 &
 $\mathbf{ P_2^\prime}=\partial _x\wedge \partial _y+\partial _z\wedge \partial _s+\partial _x\wedge \partial _z
$
 \\
 {\footnotesize Comment:}
 $p_{{12}}p_{{34}}-p_{{13}}p_{{24}}\neq 0$
 &
 &
 \\
 &
  $P_3^\prime=\partial _x\wedge \partial _y+\partial _z\wedge \partial _s+\partial _y\wedge \partial _s
$
 &
 $\mathbf{ P_3^\prime}= \partial _x\wedge \partial _y+\partial _z\wedge \partial _s+\partial _y\wedge \partial _s-
$
 \\
 &&
 $\qquad \quad z\partial _x\wedge \partial _s$
 \\
&
 &
 \\
 &
 $P_4^\prime=\partial _x\wedge \partial _y+\partial _z\wedge \partial _s+\partial _x\wedge \partial _s$
 &
 $\mathbf{P_4^\prime}=\partial _x\wedge \partial _y+\partial _z\wedge \partial _s+\partial _x\wedge \partial _s$
 \\
 & &
 \\
 &
  $P_5^\prime=\partial _x\wedge \partial _y+\partial _z\wedge \partial _s+\partial _x\wedge \partial _z+
\partial _x\wedge \partial _s$
 &
  $\mathbf{P_5^\prime}=\partial _x\wedge \partial _y+\partial _z\wedge \partial _s+\partial _x\wedge \partial _z+
$
 \\
 & &
 \qquad\quad $\partial _x\wedge \partial _s
 $
 \\
 &$P_6^\prime=\partial _x\wedge \partial _y+\partial _z\wedge \partial _s
$&
 $\mathbf{P_6^\prime}=\partial _x\wedge \partial _y+\partial _z\wedge \partial _s$
 \\
 &&\\
 &
 $P_7^\prime=\partial _x\wedge \partial _z+\partial _y\wedge \partial _s$
 &
 $\mathbf{P_7^\prime}=\partial _x\wedge \partial _z+\partial _y\wedge \partial _s-z\partial _x\wedge \partial _s$
 \\
 &&\\
 &
 $P_8^\prime=\partial _x\wedge \partial _y+\partial _x\wedge \partial _z+\partial _y\wedge \partial _s
$
 &
  $\mathbf{P_8^\prime}=\partial _x\wedge \partial _y+\partial _x\wedge \partial _z+\partial _y\wedge \partial _s
-$
 \\
 &&
 $\qquad \quad z\partial _x\wedge \partial _s$
 \\
 &&\\
 &
 $P_9^\prime=\partial _z\wedge \partial _s+\partial _x\wedge \partial _z+\partial _y\wedge \partial _s
$
 &
  $\mathbf{P_9^\prime}=\partial _z\wedge \partial _s+\partial _x\wedge \partial _z+\partial _y\wedge \partial _s-
$
 \\
 &&
 $\qquad \quad z\partial _x\wedge \partial _s$
 \\
 &&\\
 &
 $P_{10}^\prime=\partial _x\wedge \partial _z+\partial _y\wedge \partial _s+\partial _x\wedge \partial _s$
 &
  $\mathbf{P_{10}^\prime}=\partial _x\wedge \partial _z+\partial _y\wedge \partial _s+
$
 \\
 &&
 $\qquad\quad (1-z)\partial _x\wedge \partial _s$
 \\
 &&\\
 &
 $P_{11}^\prime=\partial _x\wedge \partial _y+\partial _x\wedge \partial _z+\partial _y\wedge \partial _s+
\partial _x\wedge \partial _s$
 &
 $\mathbf{P_{11}^\prime}=\partial _x\wedge \partial _y+\partial _x\wedge \partial _z+$
 \\
 &&
 \qquad\quad$\partial _y\wedge \partial _s+
(1-z)\partial _x\wedge \partial _s$
 \\
 &&\\
 &
 $P_{12}^\prime=\partial _z\wedge \partial _s+\partial _x\wedge \partial _z+\partial _y\wedge \partial _s+\partial _x\wedge \partial _s$
 &
  $\mathbf{P_{12}^\prime}=\partial _z\wedge \partial _s+\partial _x\wedge \partial _z+$
 \\
 &&
 \qquad\quad$\partial _y\wedge \partial _s+(1-z)\partial _x\wedge \partial _s$
 \\
 \hline
     \end{tabular}
 \end{table}
 \newpage

\end{document}